
\documentclass{article}
\usepackage{amsmath}
\usepackage{amsfonts}
\IfFileExists{pdfsync.sty}{\usepackage{pdfsync}}{}
\usepackage{hyperref}


\usepackage{amssymb}
\newenvironment{proof}{{\bf \LANGUE{Proof}{Démonstration}}:}{\hfill $\Box$ }
\providecommand\LANGUE[2]{#1}
\renewcommand{\restriction}{\mathord{\upharpoonright}}
\providecommand\spnewtheorem[4]{\newtheorem{#1}{#2}}
\spnewtheorem{definition}{Definition}{\bfseries}{\rmfamily}
\spnewtheorem{theorem}{\LANGUE{Theorem}{Théorème}}{\bfseries}{\rmfamily}
\spnewtheorem{remark}{\LANGUE{Remark}{Remarque}}{\bfseries}{\rmfamily}
\spnewtheorem{corollary}{\LANGUE{Corollary}{Corollaire}}{\bfseries}{\rmfamily}
\spnewtheorem{lemma}{\LANGUE{Lemma}{Lemme}}{\bfseries}{\rmfamily}
\spnewtheorem{example}{\LANGUE{Example}{Exemple}}{\bfseries}{\rmfamily}

\newcommand\myaddress{ 
LIX,  Ecole Polytechnique, Institut Polytechnique de Paris, France \\
}

\newcommand\hisaddress{University Paris-Est Créteil Val de Marne, LACL, France 
}



\usepackage[utf8]{inputenc}
\usepackage[T1]{fontenc}


%
%

\begin{document}

\title{Solvable Initial Value Problems Ruled by Discontinuous Ordinary Differential Equations}
\author{
 Olivier Bournez$^1$\thanks{Partially supported by ANR Project $\partial$IFFERENCE}, Riccardo Gozzi$^{1,2}$\footnotemark[1]
}
\date{
$^1$\myaddress 
$^2$\hisaddress
}

\maketitle

\newcommand{\interior}[1]{%
  {\kern0pt#1}^{\mathrm{o}}%
}
\newcommand\PRIMOF[1]{\tu f_{[#1]}}

\begin{abstract}
We study initial value problems having dynamics ruled by discontinuous ordinary differential equations with the property of possessing a unique solution. We identify a precise class of such systems that we call \emph{solvable intitial value problems} and we prove that for this class of problems the unique solution can always be obtained analytically via transfinite recursion. We present several examples including a nontrivial one whose solution yields, at an integer time, a real encoding of the halting set for Turing machines; therefore showcasing that the behavior of solvable systems is related to ordinal Turing computations. 
\end{abstract}

\section{Introduction}

Recent research has demonstrated the capability of programming with Ordinary Differential Equations (ODEs): a polynomial ODE can be constructed to simulate the evolution of any discrete model, such as Turing machines. This development underscores a fascinating equivalence between discrete computational models—which gain general relevance through the Church-Turing thesis—and  continuous-time analog computational models, namely models based on polynomial ODEs, known to be equivalent to the General Purpose Analog Computer (GPAC) of Claue Shannon \cite{Sha41}. Since the initial establishment of this equivalence, numerous studies have explored its implications, examining the computational attributes and capacities of ODE systems. Notable investigations include: defining computability and complexity classes via ODEs \cite{EmmanuelThese,TheseDaniel,TheseAmaury,TheseRiccardo}; establishing the existence of a universal ODE in the sense of Rubel \cite{LMCS2018}; demonstrating the strong Turing completeness of biochemical reactions \cite{CMSB17vantardise}; and addressing various completeness aspects of reachability problems, such as PTIME-completeness of bounded reachability \cite{JournalACM2017}.


These findings have primarily concentrated on polynomial ordinary differential equations (ODEs) as a representative paradigm for analog computation. A compelling inquiry arises: could it be possible to develop models more powerful than Turing machines by employing more general classes of ODEs? This question holds significant interest both from the perspective of computability (inquiring whether we can compute more) and complexity (inquiring whether we can compute faster).

In this paper, we demonstrate that it is feasible to resolve undecidable problems using discontinuous ordinary differential equations (ODEs) by simulating transfinite computations with a specific class of ODEs. We establish a mathematically robust framework for ordinary differential equations characterized by a unique and unambiguous solution. We define this robust category of ODEs based on what we term "solvable functions" and prove that these support an analytical, transfinite method leading to the solution. We elaborate on this transfinite method, noting that the maximum number of necessary transfinite steps is countable. Our primary findings are succinctly presented in the main theorem in Section \ref{sec:main}. Additionally, we illustrate the potential to construct natural instances of these solvable initial value problems (IVPs) with a unique solution. A detailed example is provided where the solution, at a specific integer time, encodes the Turing machines' halting set. Following the methodology in \cite{TAMC06}, we argue that this class of IVPs exhibits Super-Turing computational capabilities, enabling the simulation of oracle machines that decide Turing jumps.

\paragraph{Remark} 
The current article is an extended version of our article presented at STACS'2024. Compared to \cite{StacsBournezGozzi2024}, we provide more detailed statements, as well as all the proofs, and we provide furthermore many more examples of solvable IVPs. 

\subsection{Structure of the paper}

The paper is organized as follows: the last subsection of the introduction describes some historical accounts of the related areas and topics; Section \ref{sec:prel} presents the necessary knowledge for understanding the article, with a subsection for the notation, one for basic concepts in analysis, one for ordinary differential equations and one for functions of class Baire one; Section \ref{sec:solvable} describes the conditions imposed on the right-hand terms (i.e.\ introducing solvable functions) and proves some auxiliary theorems; Section \ref{sec:search} presents the search method, i.e.\ the main tool to be applied in the domain to obtain analytically the solution; Section \ref{sec:main} describes in detail the transfinite method converging to the solution; Section \ref{sec:examples} illustrates some useful examples of discontinuous IVPs of the type considered and Section \ref{sec:jumps} discusses the relation between solvable systems, Turing jumps and hypercomputation. Section \ref{sec:conclusion} is then a conclusion.

\subsection{Historical accounts and related work}

\subsubsection{Denjoy's totalization method for integration.} 
Solving ordinary differential equations (ODEs) can be seen as a more general process than antidifferentiation; specifically, integrating is a simpler form of ODE solving where the derivative is explicitly known. This ties into a long-standing mathematical question about whether it's possible to reconstruct a function $f$
 from its derivative $f'$
  under the most general conditions. Unfortunately, traditional methods like the Riemann and Lebesgue integrals fall short as they require certain conditions on the derivative to be effective. There are well-documented cases of functions that are differentiable across a closed interval, yet their derivatives are not Lebesgue integrable.

Historically, Denjoy was the pioneer in addressing these complex cases by developing an integral concept that both encompassed and extended the previous methods, allowing the reconstruction of $f$ from $f'$ 
using a transfinite process. This method, which Denjoy termed the "totalization" method, was wholly analytical, employing transfinite iterations involving limit taking and repeated Lebesgue integrations. This method is capable of deriving 
$f$ from $f'$  through at most a countable number of transfinite steps.

Our proposed method for solving initial value problems (IVPs) draws inspiration from Denjoy's approach. The class of solvable IVPs we focus on shares a crucial characteristic with Denjoy’s integrable derivatives: the solution to these systems can always be achieved through a transfinite computational process. Such IVPs bridge the gap between transfinite computations in digital models and analog models utilizing solvable dynamics, underscoring a deep connection across different computational frameworks.

\subsubsection{ODEs as analog model of computation.} 
The concept of utilizing ordinary differential equations (ODEs) as a computational model was initially explored by Claude Shannon in his work on the General Purpose Analog Computer (GPAC). Shannon's theoretical aim was to establish a universal model of computation that could replicate the behavior of integrator-based devices. He noted that all functions generated by these devices are differential-algebraic, thus laying the foundational stone for later recognizing systems of polynomial ODEs as a valid analog computational model.

The progression from algebraic differential equations to polynomial ODEs, however, is complex and technical:
In a pivotal study (\cite{Pou74}), it was revealed that Shannon's original model lacked completeness and formal structure, necessitating modifications. Later, the researchers in \cite{GC03} addressed these shortcomings by imposing restrictions on the connections within the circuits of the model. These adjustments not only refined the model but also led to the intriguing outcome of narrowing down the class of functions permissible within the GPAC framework. It was conclusively shown that this refined class aligns precisely with the solutions of polynomial ODEs.

\subsubsection{What is known about polynomial ODEs: lower bounds.} The GPAC model was later proved to be equivalent to computable analysis, capable of generating all computable functions over the reals (\cite{TAMC06}). This equivalence was pivotal for establishing a practical link between analog computation models like the GPAC and discrete computation models such as Turing machines. Specifically, the foundational work in \cite{TAMC06} showcased how Turing machine computations could be simulated using initial value problems (IVPs) crafted from polynomial ODEs. These findings paved the way for further exploration into the complexities inherent in the GPAC model. A key discovery was that the length of the solutions to these ODEs serves as an appropriate measure of complexity (\cite{JournalACM2017}).

This relationship between solution length and complexity has been effectively leveraged to categorize various complexity classes within the GPAC model, each corresponding naturally to discrete time-complexity classes such as $\operatorname{FP}$ or $\operatorname{FEXP}$ (\cite{JournalACM2017}, \cite{GozziGraca2022}). The introduction of robust conditions for the dynamical systems used to emulate Turing machines filled a crucial gap, which was also sufficient to establish a correspondence with the polynomial-space-complexity class $\operatorname{FPSPACE}$, as detailed in \cite{BGDPRiccardo2022}. Collectively, these results robustly validate polynomial ODEs as a viable paradigm for analog computation and confirm their Turing completeness.

\subsubsection{What is known about ODEs: upper bounds.}
On another front, extensive research has explored the computability and complexity properties of solving ordinary differential equations (ODEs) more generally. This line of inquiry takes a fundamentally different approach from our previous discussions. Rather than adapting discrete computations to a continuous framework, this research seeks to identify which classes of initial value problems (IVPs) involving ODEs can be solved algorithmically. Many problems for ODEs turn out to be undecidable. For instance, it has been shown that determining the boundedness of the definition domain remains undecidable for polynomial ODEs (\cite{gracca2008boundedness}).

For the class of polynomial-time computable, Lipschitz continuous ODEs, the solutions have been established as computable since \cite{Ko83}. A detailed analysis of the complexity involved was carried out in \cite{Kaw10}, revealing that solving such problems is PSPACE-complete. The requirement for the Lipschitz condition suggests that the uniqueness of a solution to a given IVP might be necessary to compute it, although it is not a sufficient criterion. There are cases where an IVP has a unique solution and computable input data, yet the solution itself remains uncomputable, such as the scenario described in \cite{PR81}.

Further exploration of the divide between necessary and sufficient conditions for computability was conducted in \cite{collins2009effective}. The findings indicated that solutions of continuous ODEs with unique solutions are necessarily computable. The algorithm developed in \cite{collins2009effective}, whimsically named the "Ten Thousand Monkeys algorithm," employs a search strategy across the entire solution space by enumerating all finite sequences of open rational boxes.

\section{Preliminaries}
\label{sec:prel}

\subsection{Notation}

We introduce the notation and the main definitions used throughout this work. The standard symbols $\mathbb{N}$, $\mathbb{N}_0$, $\mathbb{R}$ and $\mathbb{Q}$ stand for the set of natural, natural including zero, real and rational numbers respectively while $\mathbb{R}^+$ and $\mathbb{Q}^+$ represent the positive reals and the positive rationals. When making use of the norm operator we always consider euclidean norms if not specified otherwise. We use $\langle n_1,\hdots,n_k \rangle$ to denote a single integer which represents the tuple $( n_1,\hdots,n_k)$ according to some standard computable encoding. We refer to a \emph{compact domain} of a topological space as a nonempty connected compact subset of such space. Given a metric space $X$ we indicate with the notation $d_X$ the distance function in such space and with the notation $B_X(x, \delta)$ the open ball centered in $x \in X$ with radius $\delta >0$. By default we describe as open \emph{rational ball} or as open \emph{rational box} an open ball or box with rational parameters. Precisely, an open rational box $B$ is a set of the form $(a_1, b_1) \times \hdots \times (a_r, b_r) \subset \mathbb{R}^r$ for some $r \in \mathbb{N}$ where $a_i, b_i \in \mathbb{Q}$ for $i = 1, \hdots, r$. We indicate with the notations $\operatorname{diam}(B)$ and $\operatorname{rad}(B)$ its diameter and its radius. Moreover, given a function $f: [a,b] \to \mathbb{R}^r$ for some $a,b \in \mathbb{R}, a<b$ and some $r \in \mathbb{N}$ we indicate with the notation $f': [a,b] \to \mathbb{R}^r$ the derivative of such function, where the derivative on the extremes $a$ and $b$ is defined as the limit on the left and on the right respectively. Given a function $f: X \to Y$ and a set $K \subseteq X$ we indicate with the notation $f \restriction_K$ the restriction of function $f$ to the set $K$, i.e. $f \restriction_K$ is the function from $K$ to $Y$ defined as $f \restriction_K (x)= f(x)$. If $A$ and $B$ are two sets, we refer to the set difference operation using the symbol $A \setminus B$ and we indicate with the notation $A+B$ the Minkowski sum of set $A$ with set $B$. The expression $\operatorname{cl}(A)$ indicates the closure of $A$, $\operatorname{int}(A)$ the interior of $A$ and $A^c$ its complement. The notation $\emptyset$ stands for the empty set, while the notation $\omega_1$ stands for the first uncountable ordinal number. Given a property of a function $f: X \to Y$, we say that this property is satisfied \emph{almost everywhere} if the property is satisfied on $X \setminus D$, where $D$ is a set with Lebesgue measure equal to zero. We use the symbol $\phi_e$ to denote to the eth Turing functional according to some G\"{o}del enumeration. Given a set $X$ of elements and an index set $Y$ we indicate sequences of elements from $X$ with the notation $\{ x_n \}_{n \in Y}$ where $x_n \in X$ for each $n \in Y$. If the index set is the set of natural numbers, we simply write $\{ x_n \}_n$. Instead, if the index set is some ordinal number, we talk about \emph{transfinite sequences}. Given a sequence $\{ x_n \}_{n}$ for some set of elements $X$ we indicate a subsequence of such sequence with the notation $\{ x_{n(u)} \}_u$ where $n: \mathbb{N} \to \mathbb{N}$ is the function determining the elements of the subsequence considered.

\subsection{Basic concepts in analysis and useful theorems}

We introduce some useful definitions and results:

\begin{definition}[Globally Lipschitz function]
Let $[a,b]$ an interval in $\mathbb{R}$, $r \in \mathbb{N}$ and let $f: [a,b] \to \mathbb{R}^r$ be a function. We say that $f$ is \emph{globally Lipschitz} if there exists a constant $M>0$ such that $\left\Vert f(\tilde{t}) - f(t) \right\Vert \leq M \left\vert t - \tilde{t}\right\vert$ for all $a \leq t, \tilde{t} \leq b$. 
\end{definition}

\begin{definition}[Absolutely continuous function]
Let $[a,b]$ an interval in $\mathbb{R}$, $r \in \mathbb{N}$ and let $f: [a,b] \to \mathbb{R}^r$ be a function. We say that $f$ is \emph{absolutely continuous} if for all $\epsilon>0$ there exists a $\delta_\epsilon>0$ such that for all finite sequence of pairwise disjoint intervals $\{ (x_k,y_k) \}_k$ in $[a,b]$ satisfying $\sum_{k} \left\vert y_k - x_k \right\vert < \delta_\epsilon$ we have $\sum_{k} \left\Vert f(y_k) - f(x_k) \right\Vert < \epsilon$.
\end{definition}

The following theorem is a well known result, as in \cite{Hobs21}(Theorem 406)

\begin{theorem}
\label{thm:Lebediff}
Let $[a,b]$ an interval in $\mathbb{R}$, $r \in \mathbb{N}$ and let $f: [a,b] \to \mathbb{R}^r$ be an absolutely continuous function. Then $f$ is differentiable almost everywhere and $f(x)= \int_{a}^{x} f'(t) dt$ for all $x \in [a,b]$. 
\end{theorem}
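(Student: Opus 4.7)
The plan is to reduce to the scalar case $r=1$, since differentiability and the Lebesgue integral of vector-valued maps act componentwise: writing $f=(f_1,\ldots,f_r)$, each $f_i$ inherits absolute continuity from the euclidean bound $|f_i(y)-f_i(x)|\le \lVert f(y)-f(x)\rVert$. The first substantive step is to show that an absolutely continuous $f\colon[a,b]\to\mathbb{R}$ has bounded variation: pick $\delta$ corresponding to $\epsilon=1$ in the definition of absolute continuity, partition $[a,b]$ into $\lceil (b-a)/\delta\rceil$ subintervals of length less than $\delta$, and observe that on each the variation is at most $1$, so the total variation is finite.

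Next, I would invoke the Jordan decomposition to write $f=h_1-h_2$ as a difference of two nondecreasing functions and apply the classical theorem of Lebesgue, stating that monotone functions on $[a,b]$ are differentiable almost everywhere and their derivatives are Lebesgue integrable (indeed $\int_a^b h_i'\,dt\le h_i(b)-h_i(a)$). This yields that $f$ is differentiable almost everywhere with $f'$ Lebesgue integrable on $[a,b]$.

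To establish the integral representation, I would set $g(x)=f(x)-f(a)-\int_a^x f'(t)\,dt$. By the Lebesgue differentiation theorem, the indefinite integral is absolutely continuous with derivative $f'$ almost everywhere, so $g$ is absolutely continuous with $g'=0$ almost everywhere. It then suffices to prove $g\equiv 0$, which yields $f(x)=f(a)+\int_a^x f'(t)\,dt$ (the statement as written corresponding to the case $f(a)=0$).

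The main obstacle is the \emph{zero-derivative lemma}: an absolutely continuous function $g$ with $g'=0$ almost everywhere must be constant. The plan for this is a Vitali covering argument. Given $\epsilon,\eta>0$, let $\delta_\epsilon$ come from the definition of absolute continuity and let $E\subseteq[a,b]$ denote the full-measure set on which $g'$ vanishes. Cover $E$ by closed intervals $[x,x+h]$ on which $|g(x+h)-g(x)|<\eta h$, extract via Vitali a finite pairwise disjoint subfamily covering $E$ up to measure less than $\delta_\epsilon$, and estimate the increment: on the chosen intervals the contribution to $|g(b)-g(a)|$ is at most $\eta(b-a)$, while on the complementary intervals, whose total length is less than $\delta_\epsilon$, absolute continuity contributes at most $\epsilon$. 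Hence $|g(b)-g(a)|\le \eta(b-a)+\epsilon$, and letting $\epsilon,\eta\to 0$ yields $g(b)=g(a)$; the same argument applied to each $[a,x]$ gives $g\equiv g(a)=0$, completing the proof.
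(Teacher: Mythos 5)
Your proposal is correct: it is the standard textbook proof of Lebesgue's fundamental theorem of calculus for absolutely continuous functions. The paper itself does not prove this statement at all — it is quoted as a known result with a citation to Hobson (Theorem 406) and then used as a black box (e.g.\ in Theorem \ref{thm:differsequence}) — so there is no internal proof to compare against. Your chain of reductions (componentwise reduction to $r=1$; absolute continuity implies bounded variation via the $\epsilon=1$ partition trick; Jordan decomposition plus Lebesgue's differentiation theorem for monotone functions to get a.e.\ differentiability and integrability of $f'$; then the zero-derivative lemma, proved by a Vitali covering argument combined with absolute continuity, to identify $f(x)-f(a)$ with $\int_a^x f'(t)\,dt$) is exactly the classical route and each step is sound. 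You are also right to flag that the formula as stated in the paper should strictly read $f(x)=f(a)+\int_a^x f'(t)\,dt$; the displayed identity holds verbatim only when $f(a)=0$, a harmless imprecision in the paper's statement that your proof handles correctly.
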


We now define \emph{uniformly bounded} sequences of functions:

\begin{definition}[Uniformly bounded]
Let $I \subset \mathbb{R}$, $E \subset \mathbb{R}^r$ for some $r \in \mathbb{N}$ and let $\{ g_n \}_n : I \to E$ be a sequence of functions. We say that the sequence is \emph{uniformly bounded} if there exists a constant $K>0$ such that $\left\Vert g_n (t) \right\Vert \leq K$ for all $g_n \in \{ g_n \}_n$ and for all $t \in I$.
\end{definition}

We then define \emph{equicontinuous} sequences of functions:

\begin{definition}[Equicontinuous]
Let $I \subset \mathbb{R}$, $E \subset \mathbb{R}^r$ for some $r \in \mathbb{N}$ and let $\{ g_n \}_n : I \to E$ be a sequence of functions. We say that the sequence is \emph{equicontinuous} if for any $\epsilon >0$ there exists a $\delta_{\epsilon}>0$ such that $\left\Vert g_n (t) - g_n (\tilde{t}) \right\Vert \leq \epsilon $ whenever $\left\vert t - \tilde{t} \right\vert \leq \delta_{\epsilon}$ for all $g_n \in \{ g_n \}_n$ and for all $t, \tilde{t} \in I$.
\end{definition}

For infinite, uniformly bounded, equicontinuous sequence of functions over the reals there exists a famous result due to Ascoli \cite{CL55}:

\begin{theorem}[Ascoli] 
\label{thm:ascoli}
Let $I \subset \mathbb{R}$ be a bounded interval, $E \subset \mathbb{R}^r$ for some $r \in \mathbb{N}$ and let $\{ g_n \}_n : I \to E$ be an infinite, uniformly bounded, equicontinuous sequence of functions. Then the sequence $\{ g_n \}_n$ has a subsequence $\{ g_{n(u)} \}_{u}$ that converges uniformly on $I$.
\end{theorem}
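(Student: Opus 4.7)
The plan is to run the classical diagonal extraction argument, which combines pointwise compactness (supplied by uniform boundedness together with the Bolzano--Weierstrass property in $\mathbb{R}^r$) with equicontinuity to promote pointwise to uniform convergence. First I would fix a countable dense subset $D = \{q_k\}_k$ of $I$, for instance the rationals of $I$. Since $\{g_n(q_1)\}_n$ is a bounded sequence in $\mathbb{R}^r$, I extract a subsequence $\{g_{n^{(1)}(u)}\}_u$ along which the values at $q_1$ converge. From this subsequence I extract a further subsequence $\{g_{n^{(2)}(u)}\}_u$ along which the values at $q_2$ converge, and inductively, for each $k$, a subsequence $\{g_{n^{(k)}(u)}\}_u$ along which the values at $q_1,\ldots,q_k$ all converge. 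The diagonal sequence $\{g_{n^{(u)}(u)}\}_u$, which I will call $\{h_u\}_u$, is then (from index $k$ onward) a subsequence of $\{g_{n^{(k)}(u)}\}_u$, hence converges at every point of $D$.

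Next I would use equicontinuity to upgrade convergence on $D$ to uniform convergence on $I$. Given $\epsilon>0$, pick $\delta_\epsilon>0$ from the equicontinuity definition so that $\left\Vert g_n(t)-g_n(\tilde t)\right\Vert \le \epsilon$ whenever $\left\vert t-\tilde t\right\vert \le \delta_\epsilon$, uniformly in $n$. Since $I$ is a bounded interval, it is totally bounded, so I can cover it by finitely many subintervals of length at most $\delta_\epsilon$, each containing a point of $D$, say $q_{k_1},\ldots,q_{k_N}$. Since $\{h_u\}_u$ converges at each $q_{k_j}$, I choose $U$ so large that for $u,v\ge U$ and every $j$, $\left\Vert h_u(q_{k_j})-h_v(q_{k_j})\right\Vert \le \epsilon$. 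For any $t\in I$, select a $q_{k_j}$ within $\delta_\epsilon$ of $t$; a three-$\epsilon$ triangle inequality using equicontinuity on each side bounds $\left\Vert h_u(t)-h_v(t)\right\Vert$ by $3\epsilon$ for all $u,v\ge U$. Thus $\{h_u\}_u$ is uniformly Cauchy on $I$, hence uniformly convergent by completeness of $\mathbb{R}^r$.

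The main technical point to watch is the bookkeeping in the diagonal extraction, namely verifying that $\{h_u\}_u$ is eventually a subsequence of each previously constructed $\{g_{n^{(k)}(u)}\}_u$, so that convergence at every $q_k\in D$ is guaranteed. The equicontinuity-to-uniformity step is then a routine covering argument once the finite $\delta_\epsilon$-cover of $I$ by dense points is in place, and the completeness of $\mathbb{R}^r$ supplies the limit function automatically.
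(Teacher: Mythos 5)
Your proposal is correct: it is the classical Arzel\`a--Ascoli argument (diagonal extraction over a countable dense subset of $I$ via Bolzano--Weierstrass, then the $3\epsilon$ covering argument using equicontinuity and total boundedness of the bounded interval $I$ to get a uniformly Cauchy, hence uniformly convergent, subsequence). The paper does not prove this theorem at all --- it states it as a known result and cites \cite{CL55} --- and your argument is essentially the standard proof found in that reference, so there is nothing to reconcile.
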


Uniformly convergent sequences satsify the well known uniform limit theorem \cite{MUN}:

\begin{theorem}[Uniform limit theorem] 
\label{thm:uniform}
Let $I \subset \mathbb{R}$ be a bounded interval, $E \subset \mathbb{R}^r$ for some $r \in \mathbb{N}$ and let $\{ g_n \}_n : I \to E$ be a sequence of functions that converges uniformly on $I$ to function $g$. If every function $g_n$ in the sequence is continuous on $I$ then $g$ is continuous on $I$.
\end{theorem}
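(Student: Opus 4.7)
The plan is to carry out the standard $\epsilon/3$ argument. Fix an arbitrary $t_0 \in I$ and $\epsilon > 0$; the goal is to exhibit a $\delta > 0$ such that $\|g(t) - g(t_0)\| < \epsilon$ whenever $t \in I$ with $|t - t_0| < \delta$. Since $t_0$ is arbitrary, this will establish continuity of $g$ on the whole interval $I$.

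First I would use the uniform convergence hypothesis: there exists $N \in \mathbb{N}$ such that for every $n \geq N$ and every $t \in I$ one has $\|g_n(t) - g(t)\| < \epsilon/3$. Next I would invoke continuity of the particular function $g_N$ at the point $t_0$ to obtain $\delta > 0$ with $\|g_N(t) - g_N(t_0)\| < \epsilon/3$ whenever $t \in I$ satisfies $|t - t_0| < \delta$.

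Finally, for such $t$ I would split $g(t) - g(t_0)$ through $g_N$ via the triangle inequality:
\[
\|g(t) - g(t_0)\| \leq \|g(t) - g_N(t)\| + \|g_N(t) - g_N(t_0)\| + \|g_N(t_0) - g(t_0)\|.
\]
Each of the three terms is bounded by $\epsilon/3$ (the outer ones by uniform convergence applied at $t$ and at $t_0$, the middle one by the choice of $\delta$), yielding $\|g(t) - g(t_0)\| < \epsilon$. This completes the proof.

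There is no real obstacle here: the result is textbook and the only subtlety worth emphasizing is that uniform (rather than merely pointwise) convergence is what allows the same index $N$ to control the error at both $t$ and $t_0$ simultaneously, so that the middle term can be handled by the continuity of a single $g_N$. Boundedness of $I$ plays no role in this argument and is inherited trivially from the statement.
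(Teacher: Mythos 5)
Your $\epsilon/3$ argument is correct and complete; the paper does not prove this theorem itself but simply cites it as a classical result (\cite{MUN}), and your proof is precisely the standard one found there. Your closing remarks — that uniform convergence is what lets a single index $N$ control the error at both $t$ and $t_0$, and that boundedness of $I$ is not actually needed — are both accurate.
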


We are also going to use another very well known result in analysis: 

\begin{theorem}[Differentiable limit theorem]
\label{thm:difflimit}
Let $\{ f_n \}_n$ be a sequence of differentiable functions from the closed interval $[a,b]$ to $\mathbb{R}$ pointwise converging to function $f$. If the sequence of functions $\{ f'_n \}_n$ converges uniformly on $[a, b]$ to a function $r$, then $f$ is differentiable and $f'=r$.
\end{theorem}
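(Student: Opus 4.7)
The plan is to show, for each $x_0 \in [a,b]$, that the difference quotient of $f$ at $x_0$ converges to $r(x_0)$. The standard trick is to introduce auxiliary functions that encode the difference quotients of the $f_n$'s, prove they converge uniformly by a Mean Value Theorem argument, and then transfer continuity to the limit via the Uniform Limit Theorem (Theorem \ref{thm:uniform}).

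Fix $x_0 \in [a,b]$ and define, for each $n$,
$$\phi_n(x) = \frac{f_n(x) - f_n(x_0)}{x - x_0} \text{ for } x \neq x_0, \qquad \phi_n(x_0) = f'_n(x_0).$$
Differentiability of $f_n$ at $x_0$ makes each $\phi_n$ continuous on $[a,b]$. The central step is to prove that $\{\phi_n\}_n$ is uniformly Cauchy on $[a,b]$. For $x \neq x_0$, apply the Mean Value Theorem to the differentiable function $f_n - f_m$ on the closed interval with endpoints $x$ and $x_0$, obtaining a point $\xi$ strictly between them such that
$$\phi_n(x) - \phi_m(x) \;=\; \frac{(f_n - f_m)(x) - (f_n - f_m)(x_0)}{x - x_0} \;=\; f'_n(\xi) - f'_m(\xi).$$
For $x = x_0$ the difference is simply $f'_n(x_0) - f'_m(x_0)$. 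Since $\{f'_n\}_n$ converges uniformly on $[a,b]$, it is uniformly Cauchy, so the right-hand side can be bounded uniformly in $x$, and $\{\phi_n\}_n$ is uniformly Cauchy on $[a,b]$.

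By completeness, $\{\phi_n\}_n$ converges uniformly on $[a,b]$ to some function $\phi$. Pointwise convergence of $f_n$ to $f$ identifies $\phi(x) = (f(x) - f(x_0))/(x - x_0)$ for $x \neq x_0$, while $\phi(x_0) = \lim_n f'_n(x_0) = r(x_0)$. By the Uniform Limit Theorem, $\phi$ is continuous at $x_0$, so
$$\lim_{x \to x_0} \frac{f(x) - f(x_0)}{x - x_0} \;=\; \lim_{x \to x_0} \phi(x) \;=\; \phi(x_0) \;=\; r(x_0),$$
which shows that $f$ is differentiable at $x_0$ with $f'(x_0) = r(x_0)$.

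The only non-routine step is the uniform Cauchy estimate for the difference quotients: it is precisely the Mean Value Theorem that converts uniform control over $\{f'_n\}_n$ into uniform control over the secant slopes of the $f_n$'s, and this is the only place in the argument where the differentiability hypothesis is used in a nontrivial way. Everything else reduces to invoking already established facts, namely the transfer of continuity under uniform limits and the definition of the derivative as the limit of the difference quotient.
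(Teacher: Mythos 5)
Your proof is correct. The paper states Theorem \ref{thm:difflimit} as a well-known classical result and supplies no proof of its own, so there is no in-paper argument to compare against; your argument — continuity of the difference-quotient functions $\phi_n$, a uniform Cauchy estimate for them obtained from the Mean Value Theorem applied to $f_n - f_m$, and then transfer of continuity to the limit via the Uniform Limit Theorem (Theorem \ref{thm:uniform}) — is exactly the standard textbook proof of this statement. If anything, your task is slightly easier than the classical formulation, since the hypothesis here already gives pointwise convergence of $f_n$ to $f$ on all of $[a,b]$, so you can identify the limit $\phi$ directly without first proving uniform convergence of the $f_n$ themselves.
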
 

It follows a theorem concerning differentiabilty of limits of converging sequences of functions: 

\begin{theorem}
Let $\{ f_n \}_n$ be sequence of functions from the closed interval $[a,b] \subset \mathbb{R}$ to $\mathbb{R}^r$ for some $r \in \mathbb{N}$ and pointwise converging to function $f$. Let $M >0$ be such that $\left\Vert f_{n}(\tilde{t}) - f_{n}(t)  \right\Vert \leq M \left\vert t - \tilde{t}\right\vert $ for all $n \in \mathbb{N}$, for all $a \leq t, \tilde{t} \leq b$. Then $f$ is differentiable almost everywhere and $f(x)= \int_{a}^{x} f'(t) dt$ for all $\in [a,b]$. 
\label{thm:differsequence}
\end{theorem}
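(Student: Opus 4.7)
The plan is to reduce this statement to Theorem \ref{thm:Lebediff} by showing that the limit function $f$ inherits the uniform Lipschitz constant from the sequence $\{f_n\}_n$. The key observation is that the hypothesis gives a Lipschitz bound with constant $M$ that does not depend on $n$, so the bound survives a pointwise limit on both sides of the inequality.

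More concretely, I would first fix any $t, \tilde{t} \in [a,b]$ and write $\left\Vert f_n(\tilde{t}) - f_n(t) \right\Vert \leq M \left\vert t - \tilde{t} \right\vert$. Since $f_n(t) \to f(t)$ and $f_n(\tilde{t}) \to f(\tilde{t})$ by pointwise convergence, and since the norm is continuous, I can pass to the limit as $n \to \infty$ to obtain $\left\Vert f(\tilde{t}) - f(t) \right\Vert \leq M \left\vert t - \tilde{t} \right\vert$, so $f$ is itself globally Lipschitz on $[a,b]$ with constant $M$.

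Next I would note that any globally Lipschitz function is absolutely continuous: given $\epsilon > 0$, choosing $\delta_\epsilon = \epsilon / M$ suffices, since for any finite family of pairwise disjoint intervals $\{(x_k, y_k)\}_k$ with $\sum_k |y_k - x_k| < \delta_\epsilon$ one has $\sum_k \left\Vert f(y_k) - f(x_k) \right\Vert \leq M \sum_k |y_k - x_k| < \epsilon$. Hence $f$ satisfies the hypotheses of Theorem \ref{thm:Lebediff}, which yields that $f$ is differentiable almost everywhere on $[a,b]$ and that $f(x) - f(a) = \int_a^x f'(t)\, dt$ for every $x \in [a,b]$ (the statement written in the theorem should be understood up to the constant $f(a)$, or equivalently under the normalization $f(a) = 0$, which is inherited from $f_n(a) = 0$ whenever the sequence satisfies this).

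I do not anticipate a serious obstacle: the uniformity of the Lipschitz constant in $n$ is precisely what makes the limit argument trivial, and absolute continuity of Lipschitz functions is essentially immediate from the definition. The only subtle point worth flagging is that one must not confuse this with the stronger claim that $f'_n \to f'$ in any sense — indeed, Theorem \ref{thm:difflimit} is not applicable here because we are given no information about the convergence of the derivatives. The strength of this statement relies solely on the classical Lebesgue differentiation theorem for absolutely continuous functions.
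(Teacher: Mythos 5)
Your proposal is correct and follows essentially the same route as the paper: pass the uniform Lipschitz bound through the pointwise limit, deduce absolute continuity via $\delta_\epsilon = \epsilon/M$, and invoke Theorem \ref{thm:Lebediff}. Your side remark about the missing $f(a)$ term is a fair observation about how the formula is stated, but it applies equally to Theorem \ref{thm:Lebediff} as quoted in the paper, so it does not reflect a gap in your argument relative to the paper's.
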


\begin{proof}
First note that $\left\Vert f(\tilde{t}) - f(t)  \right\Vert = \lim_{n \to \infty} \left\Vert f_{n}(\tilde{t}) - f_{n}(t)  \right\Vert \leq M \left\vert t - \tilde{t}\right\vert $ for all $a \leq t, \tilde{t} \leq b$. This means that $f$ is globally Lipschitz with Lipschitz constant $M$. We now show that this implies that $f$ is absolutely continuous. Indeed, for all $\epsilon >0$, for all finite sequence of pairwise disjoint intervals $\{ (x_k,y_k) \}_k$ in $[a,b]$ such that $\sum_{k} \left\vert y_k - x_k \right\vert < \frac{\epsilon}{M}$ we have that $\sum_{k} \left\Vert f(y_k) - f(x_k) \right\Vert \leq M \sum_{k} \left\vert y_k -  x_k \right\vert < \epsilon$. Hence, since $f$ is absolutely continuous it follows from Theorem \ref{thm:Lebediff} that $f$ is differentiable almost everywhere and $f(x)= \int_{a}^{x} f'(t) dt$ for all $x \in [a,b]$. 
\end{proof}
\
\vspace{3mm}

We describe the process of \emph{transfinite recursion} as the process that for each ordinal $\alpha$ associates with $\alpha$ an object that is described in terms of objects already associated with ordinals $\beta < \alpha$. A more detailed explanation can be found in \cite{sacks1990higher}. We use the expression \emph{transfinite recursion up to $\alpha$} if the process associates an object for all ordinals $\beta <\alpha$.

We present the Cantor-Baire stationary principle \cite{EV10}, as expressed by the following theorem: 

\begin{theorem}[Cantor-Baire stationary principle]
\label{thm:statio}
Let $\left\{E_\gamma \right\}_{\gamma <\omega_1}$ be a transfinite sequence of closed subsets of $\mathbb{R}^r$ for some $r \in \mathbb{N}$. Suppose $\left\{E_\gamma \right\}_{\gamma<\omega_1}$ is decreasing; i.e.\, $E_\gamma \subseteq E_\beta$ if $\gamma \geq \beta$. Then there exists $\alpha<\omega_1$ such that $E_\beta=E_{\alpha}$ for all $\beta \geq \alpha$.
\end{theorem}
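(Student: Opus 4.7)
The plan is to exploit the second countability of $\mathbb{R}^r$ to force the decreasing transfinite sequence of closed sets to stabilize at a countable stage. I will encode each closed set by a subset of a fixed countable basis and then show that the resulting increasing transfinite sequence of subsets of $\mathbb{N}$ must stabilize before $\omega_1$.

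Concretely, I fix a countable basis $\mathcal{B} = \{U_n\}_{n \in \mathbb{N}}$ of $\mathbb{R}^r$ (for instance the open rational boxes introduced in the preliminaries). To each closed set $E$ I associate $S(E) := \{n \in \mathbb{N} : U_n \subseteq E^c\}$. Since $E^c$ is open, every one of its points has a basis neighborhood contained in $E^c$, whence $E^c = \bigcup_{n \in S(E)} U_n$; in particular the assignment $E \mapsto S(E)$ is injective on closed sets, and $E \supseteq F$ implies $S(E) \subseteq S(F)$.

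Applying this to the given sequence produces an increasing transfinite sequence $\{S(E_\gamma)\}_{\gamma < \omega_1}$ of subsets of $\mathbb{N}$. Setting $S_\infty := \bigcup_{\gamma < \omega_1} S(E_\gamma) \subseteq \mathbb{N}$, every element $n \in S_\infty$ first appears at some countable ordinal $\gamma_n$; then $\alpha := \sup_{n \in S_\infty} \gamma_n$ is a countable supremum of countable ordinals, hence $\alpha < \omega_1$. By monotonicity $S(E_\alpha) = S_\infty$, and so $S(E_\beta) = S_\infty = S(E_\alpha)$ for every $\beta \geq \alpha$; injectivity of $E \mapsto S(E)$ then gives $E_\beta = E_\alpha$, as required.

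The only delicate ingredient is the regularity of $\omega_1$ (that a countable supremum of countable ordinals is still countable); everything else is essentially bookkeeping between closed subsets of $\mathbb{R}^r$ and their \emph{basis certificates} $S(E)$. If one preferred to avoid this indirect route, a direct contradiction argument would instead attempt to extract a strictly decreasing subsequence of length $\omega_1$ and observe that each strict drop removes a new basis element from the relevant certificate, which can happen only countably often; this is the same idea packaged differently.
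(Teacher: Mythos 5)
Your argument is correct: encoding each closed set $E_\gamma$ by the set of basic open boxes contained in its complement turns the decreasing sequence of closed sets into an increasing sequence of subsets of $\mathbb{N}$, and the regularity of $\omega_1$ (applied to the least ordinals $\gamma_n$ at which each basis index first appears) yields a countable stabilization stage $\alpha$, after which injectivity of the encoding gives $E_\beta = E_\alpha$. Note that the paper does not prove this statement at all — it quotes the Cantor--Baire stationary principle from the cited literature — so there is no in-paper proof to compare against; your proof is the standard second-countability argument for this principle (your remark about extracting only countably many strict drops is indeed the same idea in contrapositive form), and it is a valid, self-contained justification of the result as used in the paper.
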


\subsection{Ordinary Differential Equations and Initial Value Problems}

We consider dynamical systems whose evolution is described by ordinary differential equations. Consider an interval $[a,b] \subset \mathbb{R}$, a compact domain $E \subset \mathbb{R}^r$ for some $r \in \mathbb{N}$, a point $y_0 \in E$ and a function $f: E \to \mathbb{R}^r$ such that the dynamical system:

\begin{equation}
\label{eq:pb}
\begin{cases}
y'(t)= f(y(t))\\
y(a)=y_0
\end{cases}
\end{equation}

has at least one solution $y:[a,b] \to \mathbb{R}^r$ with $y([a,b]) \subset E$. Given $y_0$ and $f$, the problem of obtaining one solution in such setting is called an initial value problem. The condition $y(a)=y_0$ (or, in short, just the point $y_0$) is referred to as the initial condition of the problem and function $f$ is referred to as the right-hand term of the problem. In general there can be multiple valid solutions for the same IVP. Nonetheless, in this work we consider only IVPs whose solution is uniquely defined. Therefore, we refer to function  $y:[a,b] \to E$ satisfying Equation \ref{eq:pb} as the solution of the problem. In this particular setting there exist different ways to obtain the solution analytically when the right-hand term is continuous. Many of these methods, such as building Tonelli sequences, are often introduced for proving Peano's theorem related to the existence of the solution for IVPs with continuous right-hand terms, and are based on the concept of constructing a sequence of continuous functions converging to the solution. Once it is known that the solution is unique, every sequence considered in each of these methods can be shown to converge to the unique solution. An analysis based on these methods can also achieve computability for the solution, where computability has to be intended in the sense of computable analysis, as it has been proved by the authors of \cite{collins2009effective} in the description of their so called \emph{ten thousand monkeys} algorithm. The idea of this algorithm is to exploit the hypothesis of unicity for enclosing the solution into covers of arbitrarily close rational boxes in $E$. 

\begin{definition}[Ten thousand monkeys algorithm]
\label{def:monkeys}
Consider a compact domain $E \subset \mathbb{R}^r$ for some $r \in \mathbb{N}$ and a right-hand term $f: E \to \mathbb{R}^r$ for an ODE of the form of the one in Equation \ref{eq:pb}. Let $y_0$ be a point in $E$. We call the \emph{ten thousand monkeys algorithm for $(f, y_0)$} the following procedure: enumerate all tuples of the form $\left(X_{i, j}, h_i, B_{i, j}, C_{i, j}, Y_{i, j}\right)$ for $i= 0, \ldots, l-1, j=1, \ldots, m_i$, where $l, m_i \in \mathbb{N}, X_{i, j}, B_{i, j}, C_{i, j}$ and $Y_{i, j}$ are open rational boxes in $E$ and $h_i \in \mathbb{Q}^+$. Such a tuple is a run of the algorithm. A run of the algorithm is said to be valid if $y_0 \in \operatorname{int}\left(\bigcup_{j=1}^{m_0} X_{0, j}\right)$, and for all $i=0, \ldots, l-1$ and $j=1, \ldots, m_i$, we have:

\begin{enumerate}
\item $\operatorname{int}\left( f\left(B_{i, j}\right)\right) \subset \operatorname{cl}\left(C_{i, j}\right)$
\item $X_{i, j} \cup Y_{i, j} \subset B_{i, j}$;
\item $X_{i, j}+h_i C_{i, j} \subset Y_{i, j}$;
\item $\bigcup_{j=1}^{m_i} Y_{i, j} \subset \bigcup_{j=1}^{m_{i+1}} X_{i+1, j}$.
\end{enumerate}

The output of the algorithm is the infinite sequence of all valid runs. 

\end{definition}

Concerning this algorithm, the authors of \cite{collins2009effective} proved the following theorem:

\begin{theorem}
\label{thm:monkeys}
Consider an interval $[a,b] \subset \mathbb{R}$, a compact domain $E \subset \mathbb{R}^r$ for some $r \in \mathbb{N}$, a right-hand term $f: E \to \mathbb{R}^r$ and an initial condition $y_0$ for an IVP of the form of Equation \ref{eq:pb}. Consider the ten thousand monkeys algorithm for $(f, y_0)$ and its valid runs of the form $\left(X_{i, j}, h_i, B_{i, j}, C_{i, j}, Y_{i, j}\right)$ for $i= 0, \ldots, l-1, j=1, \ldots, m_i$, where $l, m_i \in \mathbb{N}, X_{i, j}, B_{i, j}, C_{i, j}$ and $Y_{i, j}$ are open rational boxes in $E$ and $h_i \in \mathbb{Q}^+$. For each valid run define $t_0=a$ and $t_i=\sum_{j=0}^{i-1} h_j $ for all $i \in \{ 1, \ldots, l \}$. If $f$ is continuous and is such that there exists a unique solution $y:[a,b] \to E$ of the IVP then we have that: 

\begin{itemize}
\item Any valid run of the algorithm satisfies $y(t) \in B_i$ for all $t_i \leq t \leq t_{i+1}$, for all $i \in \{ 0, \ldots, l-1 \}$.
\item For any $\epsilon > 0$ there is a valid run of the algorithm such that $t_l=b$ and $\operatorname{diam}(B_i)< \epsilon $ for all $i \in \{ 0, \ldots, l-1 \}$
\end{itemize}

where $B_i= \bigcup_ {j=1, \ldots, m_{i}} B_{i, j}$ for all $i= 0, \ldots, l -1$.

\end{theorem}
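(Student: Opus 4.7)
My plan is to prove the two bullets separately: the first (soundness) by induction on the step index $i$, and the second (completeness) by an explicit construction driven by the uniform continuity of $f$ and of the unique solution $y$.

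For the first bullet, I would maintain as inductive hypothesis that $y(t_i)\in\bigcup_j X_{i,j}$. The base case $i=0$ follows immediately from $y_0\in\operatorname{int}\bigl(\bigcup_j X_{0,j}\bigr)$, which is part of the definition of a valid run. For the inductive step, fix $j_0$ with $y(t_i)\in X_{i,j_0}\subset B_{i,j_0}$ (by condition~2) and set
$$\tau \;=\; \sup\bigl\{\,t\in[t_i,t_{i+1}] : y(s)\in B_{i,j_0}\text{ for all }s\in[t_i,t]\,\bigr\}.$$
Since $B_{i,j_0}$ is open and $y$ is continuous, $\tau>t_i$. As long as $y(s)\in B_{i,j_0}$, condition~1 and continuity of $f$ give $y'(s)=f(y(s))\in\operatorname{cl}(C_{i,j_0})$; integration, combined with convexity of the closed rational box, yields $y(t)-y(t_i)\in(t-t_i)\operatorname{cl}(C_{i,j_0})$ for $t\in[t_i,\tau]$. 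Conditions~2 and~3 then imply that $y(t)$ cannot escape $\operatorname{cl}(B_{i,j_0})$ before time $t_{i+1}$, whence $\tau=t_{i+1}$ and $y(t_{i+1})\in\operatorname{cl}(Y_{i,j_0})$. Condition~4 finally places $y(t_{i+1})$ in $\bigcup_j X_{i+1,j}$, closing the induction; since $B_i=\bigcup_j B_{i,j}$, this also gives $y(t)\in B_i$ for every $t\in[t_i,t_{i+1}]$.

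For the second bullet, I would exploit that both $y$ and $f$ are uniformly continuous on their compact domains. Given $\epsilon>0$, I would choose $h>0$ small enough that $y$ oscillates by less than some $\rho<\epsilon/4$ on every subinterval of length $h$. Setting $t_i=a+ih$ (adjusting the last step so that $t_l=b$), I would take $X_{i,1}$ to be a rational box of radius roughly $\rho$ around $y(t_i)$; $B_{i,1}$ a rational box of diameter less than $\epsilon$ enclosing the trajectory on $[t_i,t_{i+1}]$ together with a neighborhood; $C_{i,1}$ a rational box tightly enclosing $f(B_{i,1})$ (which is small by uniform continuity of $f$); and $Y_{i,1}$ a rational box enclosing $X_{i,1}+h\,C_{i,1}$ that still sits inside $B_{i,1}$. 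Conditions~1 and~2 hold by construction, condition~3 follows from the tightness of $C_{i,1}$, and condition~4 is ensured by choosing $X_{i+1,1}$ to contain $Y_{i,1}$ while remaining of the prescribed diameter.

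I expect the main obstacle to be the sizing bookkeeping in the second bullet. Conditions~1--4 are all strict open inclusions, and condition~4 couples consecutive steps; matching real-valued trajectory data with rational boxes requires a careful schedule (for instance, an $\epsilon/2^n$-type refinement) together with the uniform bound on $\|f\|$ on $E$, which makes $y$ Lipschitz and hence permits the step size and box radii to be chosen uniformly in $i$.
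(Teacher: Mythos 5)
First, note that the paper itself does not prove this theorem: it is quoted from Collins and Gra\c{c}a \cite{collins2009effective}, so there is no internal proof to compare against; I evaluate your argument on its own. Your treatment of the first bullet is essentially the standard (and correct) one: induction on $i$ plus a continuation argument inside a single box. Two details need to be made explicit. Condition~3 only constrains the full-step Minkowski sum $X_{i,j}+h_iC_{i,j}$, so to confine the trajectory at intermediate times you must add the convexity observation $x+sc=(1-\tfrac{s}{h_i})x+\tfrac{s}{h_i}(x+h_ic)$, which together with conditions~2--3 and convexity of the box $B_{i,j_0}$ places $y(t)$ in $\operatorname{cl}(B_{i,j_0})$ for $t_i\le t\le t_{i+1}$; and there are open/closed boundary issues (e.g.\ passing from $y(t_{i+1})\in\operatorname{cl}(Y_{i,j_0})$ to membership in the open union $\bigcup_j X_{i+1,j}$, and reading condition~1 as $f(B_{i,j})\subset\operatorname{cl}(C_{i,j})$) which are artifacts of the transcription and are fixable.

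The second bullet is where there is a genuine gap. Your construction uses a single chain of boxes ($m_i=1$) following $y$, and relies only on uniform continuity of $f$ and $y$; uniqueness of the solution is never actually used. This cannot work in the stated generality. Condition~3 forces the coordinatewise width of $X_{i+1,1}\supset Y_{i,1}\supset X_{i,1}+h_iC_{i,1}$ to grow by at least $h_i\cdot\operatorname{width}(C_{i,1})\ge h_i\cdot\operatorname{spread}(f\ \text{on}\ B_{i,1})$, and for merely continuous (non-Lipschitz) $f$ this wrapping effect does not vanish as $h_i$ and the initial box shrink. Concretely, take $f(x_1,x_2)=(1,-\operatorname{sign}(x_2)\sqrt{|x_2|})$ with initial condition $(a,0)$: the solution $(t,0)$ is unique, yet for any single-chain valid run the $x_2$-widths $u_i$ obey $u_{i+1}\ge u_i+h_i\sqrt{u_i}$, and a comparison with $\dot u=\sqrt{u}$ (whose time to climb from any positive value to $\epsilon$ is at most $O(\sqrt{\epsilon})$, independently of the starting value) shows that over total time $b-a$ some box must have diameter of order $(b-a)^2$, so $\operatorname{diam}(B_i)<\epsilon$ is impossible for small $\epsilon$. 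Your fallback ingredients (Lipschitzness of $y$, an $\epsilon/2^n$ refinement schedule) do not remove this obstruction; they would only suffice if $f$ itself were Lipschitz. The missing ideas are precisely the two features your sketch discards: the multiplicity $m_i$ of boxes per step, which lets the inflated image $\bigcup_j Y_{i,j}$ be re-covered by many small boxes rather than one growing box, and an essential use of uniqueness --- one shows (via an Arzel\`a--Ascoli compactness argument on approximate solutions, as in the cited source and in the same spirit as the proof of Theorem~\ref{thm:main}) that the solution funnels of the $\delta$-inflated problem collapse onto the unique solution as $\delta\to 0$, and then builds the valid run from small rational-box covers of these funnels at the grid times. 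Without that step the sizing bookkeeping you anticipate cannot be completed.
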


It is clear from the theorem above that the unique solution can be obtained analytically as a limit produre from the output of the ten thousand monkeys algorithm. Nonetheless, all these methods are correctly functioning as long as the right-hand term of the IVP is continuous. Our first objective is to relax continuity for the right-hand term of the IVP while still being able to obtain the unique solution. Our analysis shows that it is still possible to analytically obtain the solution from a more general class of IVPs with discontinuous right-hand terms when the solution is unique. We show that this is possible via transfinite recursion as long as we impose the correct requirements on the right-hand term of the IVP. 

\begin{remark}[Discontinuous ODEs]

It is important to point out that several mathematical theories exist for discussing discontinuous ODEs: see for example \cite{FilippovBook,aubin2012differential,deimling2011multivalued}. However the concept of solution differs from one theory to the other (there is not a unique theory for discontinuous ODEs) and that the existence of a solution is often a non-trivial problem in all these theories. We are here, and in all the examples, in the case where we know that there exisits a solution and a unique solution, so in a case where there is no ambiguity about the solution concept. Furthermore, all the theories we know consider that equality almost everywhere in \eqref{eq:pb} is sufficient, mostly to be able to define the solution in itegral form by means of Lebesgue integration. We instead require equation \eqref{eq:pb} to hold for all points. 

\end{remark}

\subsection{Functions of class Baire one}

We define the \emph{set of discontinuity points} of a given function:

\begin{definition}[Set of discontinuity points]
\label{def:disco}
Let $f$ be a function $f: X \to Y$ where $X$ and $Y$ are two complete metric spaces. We define the \emph{set of discontinuity points} (of $f$ on $X$) as the the set: 
\begin{equation*}
D_f = \left\{ x \in X : \exists \epsilon > 0 : \forall \delta >0 \; \exists y,z \in \; B_X (x, \delta) : d_Y( f(y), f(z) ) > \epsilon \right\}
\end{equation*}
\end{definition}

We define what it means for a given function to be \emph{of class Baire one}:

\begin{definition} [Baire one]
Let $X$, $Y$ be two separable, complete metric spaces. A function $f: X \to Y$ is \emph{of class Baire one} if it is a pointwise limit of a sequence of continuous functions, i.e.\ if there exists a sequence of continuous functions from $X$ to $Y$, $\{ f_m \}_m$,  such that $\displaystyle\lim_{m \to \infty} f_m (x) = f(x)$ for all $x \in X$.
\end{definition}

Notably, all derivaties are functions of class Baire one by definition. An important property of functions of class Baire one is that the composition of a function of class Baire one with a continuous functions yields a function of class Baire one \cite{ZH07}. We now refresh some well known topological concepts:

\begin{definition}[Nowhere dense set, meager set]
Let $X$ denote a topological space and let $S$ be a subset of $X$.
\begin{itemize}
\item We say that $S$ is \emph{nowhere dense} (in $X$) if its closure has empty interior. 
\item We say that $S$ is \emph{meager} (in $X$) if it is a countable union of nowhere dense sets. 
\end{itemize}
\end{definition}

It follows the definition of what it is means for a topological space to be \emph{Baire}. 

\begin{definition}[Baire space]
Let $X$ be a topological space, we say that this space is \emph{Baire} if countable unions of closed sets with empty interior also have empty interior. 
\end{definition}

We then present one important theorem obtained by Baire, known as the Baire cathegory theorem \cite{BAI99}. 

\begin{theorem}[Baire cathegory theorem]
Let $X$ be a complete metric space. Then we have: 
\begin{enumerate}
\item A meager set in $X$ has empty interior. 
\item The complement of a meager set in $X$ is dense in $X$. 
\item A countable intersection of open dense sets in $X$ is dense in $X$. 
\end{enumerate}
\end{theorem}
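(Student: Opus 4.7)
The plan is to prove the three items together, with item (3) being the core analytic content (requiring completeness) and items (1) and (2) following by straightforward complementation arguments.

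First I would establish item (3) via the classical nested balls construction. Let $\{U_n\}_n$ be a countable family of open dense subsets of $X$, and let $V \subseteq X$ be an arbitrary nonempty open set; the goal is to show $V \cap \bigcap_n U_n \neq \emptyset$. Since $U_1$ is dense and $V$ is open and nonempty, $V \cap U_1$ is nonempty and open, so it contains some closed ball $\overline{B_X(x_1,r_1)}$ with $0<r_1<1$. Proceeding recursively, given $\overline{B_X(x_n,r_n)} \subset V \cap U_1 \cap \cdots \cap U_n$, the density of $U_{n+1}$ together with openness of $B_X(x_n,r_n)$ gives us a closed ball $\overline{B_X(x_{n+1},r_{n+1})} \subset B_X(x_n,r_n) \cap U_{n+1}$ with $0 < r_{n+1} < 1/(n+1)$. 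The sequence $\{x_n\}_n$ is then Cauchy because all points $x_m$ with $m \geq n$ lie in $B_X(x_n,r_n)$ and $r_n \to 0$. By completeness of $X$, this Cauchy sequence converges to some $x^\star \in X$, and $x^\star$ lies in every closed ball $\overline{B_X(x_n,r_n)}$, hence in $V \cap \bigcap_n U_n$.

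Next I would derive item (2) from item (3). Let $M \subseteq X$ be meager, written as $M = \bigcup_n N_n$ with each $N_n$ nowhere dense, i.e.\ with $\operatorname{int}(\operatorname{cl}(N_n)) = \emptyset$. Then each set $U_n := X \setminus \operatorname{cl}(N_n)$ is open (as the complement of a closed set) and dense (since its closure equals $X$, because any nonempty open set would otherwise lie inside $\operatorname{cl}(N_n)$, contradicting the fact that the latter has empty interior). By item (3), $\bigcap_n U_n$ is dense in $X$. Since $\bigcap_n U_n = X \setminus \bigcup_n \operatorname{cl}(N_n) \subseteq X \setminus \bigcup_n N_n = X \setminus M$, the superset $X \setminus M$ is a fortiori dense.

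Finally, item (1) is an immediate consequence of item (2): if $X \setminus M$ is dense then every nonempty open subset of $X$ meets $X \setminus M$, so no nonempty open set can be contained in $M$, which precisely means $\operatorname{int}(M) = \emptyset$.

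The main obstacle, and the only place where we genuinely use the hypothesis that $X$ is a complete metric space rather than merely a topological space, is the convergence step in the nested balls argument; the rest is purely formal set-theoretic manipulation once (3) is in hand. The care required is in choosing the radii $r_n$ strictly decreasing to zero and taking \emph{closed} balls at each step, so that the limit point automatically belongs to every $U_n$.
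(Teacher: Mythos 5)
Your proof is correct: the nested-closed-ball argument for item (3) is carried out carefully (closed balls with radii shrinking to zero, Cauchyness from nesting, completeness to produce the limit point in every $U_n$ and in $V$), and the derivations of items (2) and (1) by complementation are sound. The paper itself does not prove this theorem --- it is quoted as a classical result with a citation to Baire --- and your argument is precisely the standard proof one would cite there, so there is nothing substantive to compare.
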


It is immediately clear from this theorem combined with the definition of Baire space presented above that any complete metric space is also a Baire space. 

\begin{corollary}
\label{coro:Baire}
Let $X$ be a complete metric space, then $X$ is a Baire space. 
\end{corollary}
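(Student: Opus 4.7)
The plan is to unwind the definitions and apply the first clause of the Baire category theorem directly. Let $\{F_n\}_{n \in \mathbb{N}}$ be a countable family of closed subsets of $X$ with empty interior, and set $F = \bigcup_{n \in \mathbb{N}} F_n$. I need to show $\operatorname{int}(F) = \emptyset$.

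First I would observe that each $F_n$ is \emph{nowhere dense}: since $F_n$ is closed, $\operatorname{cl}(F_n) = F_n$, and the hypothesis gives $\operatorname{int}(\operatorname{cl}(F_n)) = \operatorname{int}(F_n) = \emptyset$. By the definition of meager sets, $F$ is then a countable union of nowhere dense sets and is therefore meager in $X$.

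Now I would invoke part (1) of the Baire category theorem (applicable because $X$ is a complete metric space by hypothesis), which states that any meager set in $X$ has empty interior. Applying this to $F$ gives $\operatorname{int}(F) = \emptyset$. Since $F$ was an arbitrary countable union of closed sets with empty interior, this is exactly the condition in the definition of a Baire space, so $X$ is Baire.

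There is no genuine obstacle here: the proof is a routine translation between the two definitions (``meager has empty interior'' versus ``countable union of closed sets with empty interior has empty interior''), and the only substantive ingredient, completeness of $X$, is fed into the Baire category theorem whose conclusion is assumed in the excerpt.
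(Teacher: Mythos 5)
Your proof is correct and follows exactly the route the paper intends: it unwinds the definition of a Baire space, notes that closed sets with empty interior are nowhere dense so the union is meager, and applies part (1) of the Baire category theorem. This matches the paper, which treats the corollary as an immediate consequence of that theorem combined with the definition of a Baire space.
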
 

In some texts the above corollary can be found expressed as an alternative, equivalent version of the Baire cathegory theorem we presented above \cite{MUN}[Theorem 48.2]. Another well known consequence of the above theorem is the following corollary \cite{LIT14}[Theorem 2.3]:

\begin{corollary} 
\label{coro:Baire2}
Let $X$ be a nonempty Baire space and let $\{ K_n \}_n$ be a sequence of closed subsets of $X$ such that $X= \displaystyle\cup_{n=1}^{\infty} K_{n}$. Then there exists at least one $n_0 \in \mathbb{N}$ such that $K_{n_0}$ has nonempty interior.
\end{corollary}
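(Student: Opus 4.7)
The plan is to argue by contradiction using the definition of a Baire space directly. Suppose, for the sake of contradiction, that every $K_n$ in the sequence has empty interior. Each $K_n$ is by hypothesis closed, so the family $\{K_n\}_n$ is a countable collection of closed subsets of $X$ with empty interior.

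By the definition of a Baire space (as given just before Corollary \ref{coro:Baire}), the countable union of closed sets with empty interior must itself have empty interior. Applying this to our family gives that $\bigcup_{n=1}^{\infty} K_n$ has empty interior.

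The last step is to derive the contradiction from the hypothesis $X = \bigcup_{n=1}^{\infty} K_n$. Since $X$ is nonempty and $X$ is open as a subset of itself, $\operatorname{int}(X) = X \neq \emptyset$. Hence $\bigcup_{n=1}^{\infty} K_n$ has nonempty interior, contradicting the conclusion of the previous step. Therefore, at least one $K_{n_0}$ must have nonempty interior.

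I do not anticipate any serious obstacle here: the statement is essentially a direct contrapositive restatement of the Baire space condition, and the only subtlety is making explicit that the whole space $X$ is its own interior and is nonempty by hypothesis, which is exactly what forces some $K_n$ out of the "nowhere dense" regime.
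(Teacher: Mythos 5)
Your proof is correct: the statement is exactly the contrapositive of the definition of a Baire space as given in the paper, and your argument (assume all $K_n$ have empty interior, apply the Baire property to their countable union, and contradict $\operatorname{int}(X)=X\neq\emptyset$) is the standard one. The paper itself offers no proof here---it simply cites the result from the literature---so there is nothing to diverge from; your two-line contradiction argument is the natural way to fill that citation in.
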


Let us now consider the set of discontinuity points $D_{f}$ of a function $f: X \to Y$ where $X$ and $Y$ are two nonempty complete metric spaces. Notice that this set can be expressed as: 

\begin{equation*}
D_f = \displaystyle\bigcup_{n=1}^{\infty} D_{f,n}
\end{equation*}

where each set $D_{f,n}$ is defined as:

\begin{equation*}
 D_{f,n}= \left\{ x \in X : \forall \delta \geq 0 \; \exists y,z \in B_X (x, \delta) : d_Y ( f(y), f(z) ) \geq \frac{1}{n} \right\} 
\end{equation*}

it is easy to see that each of these sets is closed in $X$.  If we require for function $f$ to be of class Baire one, we obtain something more: 

\begin{lemma}

Let $f: X \to Y$ be a function of class Baire one where $X$ and $Y$ are two nonempty, separable complete metric spaces. Then for every $n \in \mathbb{N}$, the set $D_{f,n}$ defined as:

\begin{equation*}
 D_{f,n}= \left\{ x \in X : \forall \delta \geq 0 \; \exists y,z \in B_X (x, \delta) : d_Y ( f(y), f(z) ) \geq \frac{1}{n} \right\} 
\end{equation*}

has empty interior. 

\end{lemma}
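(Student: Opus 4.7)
The plan is to argue by contradiction. Assume $D_{f,n}$ has nonempty interior for some $n \in \mathbb{N}$, so it contains an open ball $U$. Using the Baire one hypothesis, fix a sequence $\{f_m\}_m$ of continuous functions from $X$ to $Y$ with $f_m \to f$ pointwise. The idea is to use the Baire category theorem inside $U$ to locate an open sub-ball $B \subseteq U$ on which a single $f_{m_0}$ already approximates $f$ uniformly up to a small error $\eta$; continuity of $f_{m_0}$ then forces the oscillation of $f$ below $1/n$ on a smaller ball, contradicting the defining property of $D_{f,n}$.

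Concretely, choose $\eta = 1/(5n)$ and, for each $m \in \mathbb{N}$, set
\[
A_m \;=\; \bigcap_{j,k \geq m} \{\, x \in X : d_Y(f_j(x), f_k(x)) \leq \eta \,\}.
\]
Each set in the intersection is closed, as a preimage of $[0,\eta]$ under the continuous map $x \mapsto d_Y(f_j(x), f_k(x))$, so each $A_m$ is closed in $X$. Pointwise convergence makes $\{f_m(x)\}_m$ Cauchy at every $x \in X$, so $X = \bigcup_m A_m$, and therefore $U = \bigcup_m (A_m \cap U)$ is a countable union of sets that are closed in $U$. Since $U$, being an open subset of the complete metric space $X$, is itself a Baire space (a standard consequence of the fact that open subsets of complete metric spaces are completely metrizable), Corollary \ref{coro:Baire2} yields an index $m_0$ such that $A_{m_0} \cap U$ has nonempty interior in $U$; since $U$ is open in $X$, this interior contains an open ball $B$ of $X$ with $B \subseteq A_{m_0}$. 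Letting $k \to \infty$ inside the defining inequality of $A_{m_0}$ and using continuity of $d_Y$ in its second argument shows that $d_Y(f_j(x), f(x)) \leq \eta$ for every $x \in B$ and every $j \geq m_0$.

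To conclude, fix any $x_0 \in B$. By continuity of $f_{m_0}$ at $x_0$, choose $\delta_0 > 0$ with $B_X(x_0, \delta_0) \subseteq B$ and $d_Y(f_{m_0}(x), f_{m_0}(x_0)) \leq \eta$ for all $x \in B_X(x_0, \delta_0)$. For arbitrary $y, z \in B_X(x_0, \delta_0)$, the triangle inequality yields
\[
d_Y(f(y), f(z)) \;\leq\; d_Y(f(y), f_{m_0}(y)) + d_Y(f_{m_0}(y), f_{m_0}(x_0)) + d_Y(f_{m_0}(x_0), f_{m_0}(z)) + d_Y(f_{m_0}(z), f(z)) \;\leq\; 4\eta \;<\; \frac{1}{n}.
\]
But $x_0 \in B \subseteq U \subseteq D_{f,n}$, so by definition of $D_{f,n}$ there must exist $y, z \in B_X(x_0, \delta_0)$ with $d_Y(f(y), f(z)) \geq 1/n$, contradicting the bound just obtained.

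The only bookkeeping step that requires care is selecting $\eta$ small enough that the four-term triangle inequality stays strictly below $1/n$; the remainder is a clean combination of Baire category and continuity of the approximants. The only conceptual point worth flagging is the appeal to the fact that the open ball $U$ is itself Baire, which is what lets Corollary \ref{coro:Baire2} be applied directly inside $U$ rather than inside the ambient $X$ (where a covering by closed sets would only give interior somewhere in $X$, not necessarily inside $U$).
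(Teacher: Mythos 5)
Your argument is correct and follows essentially the same route as the paper's proof: a contradiction via the Baire category theorem applied inside the assumed open ball, using the closed sets where the tail of the approximating sequence is uniformly $\eta$-Cauchy (the paper uses $1/(4n)$ where you use $1/(5n)$), then passing to the pointwise limit and invoking continuity of the selected approximant to bound the oscillation of $f$ below $1/n$ on a small ball. The only differences are cosmetic (the choice of $\eta$ and the four-term versus three-term triangle inequality), so no further comparison is needed.
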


\begin{proof}

For each $n \in \mathbb{N}$ the case in which $D_{f,n} = \emptyset$ is trivial. Suppose then $D_{f,n} \neq \emptyset$. We are going to prove the lemma by contradiction. Suppose that there is an open ball $O_n$ such that $O_n \subset D_{f,n}$. Recall now that since $f$ is a function of class Baire one and $D_{f,n}$ is a separable complete metric space we can define a sequence of continuous functions $\{ f_m \}_m$ such that $\displaystyle\lim_{m \to \infty} f_m (x) = f(x)$ for all $x \in D_{f,n}$. Then, we can define a sequence of sets $ \{ C_{n, N} \}_N$, where for every $N \in \mathbb{N}$, each set is defined as: 

\begin{equation*}
C_{n, N}= \left\{ x \in O_n : \forall m,k \geq N \; , \;  d_Y ( f_m (x), f_k (x) ) \leq \frac{1}{4n} \right\}
\end{equation*}

These sets are clearly closed. By pointwise convergence of the sequence $\{ f_m \}_m$ we have, $ O_n = \displaystyle\cup_{N=1}^{\infty} C_{n, N}$. Because every open subset of a Baire space is itself a Baire space, we can make use of Corollary \ref{coro:Baire2} in order to conclude that there exists at least one $N^*$ such that the set $C_{n, N^*}$ contains on open ball $O^*_n$ in $O_n$. That means that we have $ O^*_n \subseteq C_{n, N^*} \subseteq O_n$. 

Notice now that by taking pointwise limits in the definition of $C_{n, N^*}$, for each $x \in O^*_n \subseteq C_{n, N^*}$ we obtain that $d_Y ( f (x), f_{N^*} (x) ) \leq \frac{1}{4n}$. We now show that this implies that $D_{f,n} \cap O^*_n = \emptyset$ which contradicts the hypothesis $O_n \subseteq D_{f,n}$ since $ O^*_n \subseteq O_n$. Indeed, since function $f_{N^*}$ is continuous on $O^*_n$, we know that for each $x \in O^*_n$ there exists a $\delta >0$ such that $\forall y, z \in B_{O^*_n} (x, \delta), \;  d_Y ( f_{N^*}(y), f_{N^*}(z) ) < \frac{1}{4n}$ which in turn implies that $ d_Y ( f(y), f(z) ) < \frac{1}{2n}$. Therefore, by definition of $D_{f,n}$, we have $D_{f,n} \cap O^*_n = \emptyset$ which as mentioned contradicts the hypothesis, proving the lemma.
\end{proof}


The lemma above, combined with Corollary \ref{coro:Baire} immediately yields: 

\begin{lemma}
\label{lemma:Baire}
Let $f: X \to Y$ be a function of class Baire one where $X$ and $Y$ are two nonempty, separable complete metric spaces. Then the set of discontinuity points $D_f$ of function $f$ in $X$ is a meager set which can be expressed as $D_f = \displaystyle\cup_{n=1}^{\infty} I_n$ where, for all $n \in \mathbb{N}$, $I_n$ is a closed nowhere dense set. Moreover, $D_f$ has empty interior. 
\end{lemma}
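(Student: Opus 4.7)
The plan is to assemble the lemma directly from ingredients already established in the text: the decomposition $D_f = \bigcup_{n=1}^{\infty} D_{f,n}$ introduced just before the previous lemma, the fact noted there that each $D_{f,n}$ is closed, the previous lemma itself (which asserts that each $D_{f,n}$ has empty interior), and Corollary \ref{coro:Baire} (which states that a complete metric space is Baire).

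First I would simply recall the decomposition $D_f = \bigcup_{n=1}^{\infty} D_{f,n}$ that was already observed for an arbitrary function $f : X \to Y$ between complete metric spaces, and the observation that each $D_{f,n}$ is closed in $X$ (this uses only that the condition defining $D_{f,n}$ is preserved under taking limits of points in $X$). Setting $I_n := D_{f,n}$, the previous lemma applied to our Baire-one function $f$ tells us that each $I_n$ has empty interior; combined with closedness this gives that each $I_n$ is nowhere dense. Hence $D_f = \bigcup_n I_n$ is, by definition, a meager set expressed as the countable union of closed nowhere dense sets, which is the first claim.

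For the last sentence, I would invoke Corollary \ref{coro:Baire} to assert that $X$, being a complete metric space, is a Baire space, and then use part 1 of the Baire category theorem stated above, which says that a meager set in $X$ has empty interior. Applied to the meager set $D_f$, this gives $\mathrm{int}(D_f) = \emptyset$, completing the proof.

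There is no real obstacle here; the entire argument is bookkeeping that rewires the previous lemma and the Baire category theorem into the form needed. The only small point to be careful about is that the previous lemma was stated for the individual sets $D_{f,n}$, so one must explicitly note that closedness of each $D_{f,n}$ was established separately (just before the previous lemma) and is not part of that lemma's conclusion; once this is spelled out, "closed with empty interior equals nowhere dense" finishes that half and Baire's theorem finishes the other half.
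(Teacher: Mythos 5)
Your proposal is correct and follows exactly the route the paper intends: the paper derives this lemma by combining the preceding lemma (each $D_{f,n}$ has empty interior) with the previously noted decomposition $D_f = \cup_{n=1}^{\infty} D_{f,n}$ into closed sets, and then applies Corollary \ref{coro:Baire} together with the Baire category theorem to get empty interior. No differences worth noting.
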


\section{Solvable functions}
\label{sec:solvable}

We start this section by preparing the right setting for a transfinite classification of the right-hand terms of our systems. This stratification is based upon the degree of discontinuity of the functions considered. We can quantify this precisely by introducing the following definition:

\begin{definition}[Sequence of $f$-removed sets on $E$]
\label{def:removed}

For some $r,m \in \mathbb{N}$, consider a compact domain $E \subset \mathbb{R}^r$ and a function $f: E \to \mathbb{R}^m$. Let $\{ E_\alpha\}_{\alpha}$ be a transfinite sequence of sets and $\{ f_\alpha \}_{\alpha}$ a transfinite sequence of functions such that $f_\alpha = f \restriction_{E_{\alpha}}: E_{\alpha} \to \mathbb{R}^m$ defined as following:

\begin{itemize}
\item $E_0=E$ 
\item For every $\alpha$, $E_{\alpha+1}=  D_{ f_{\alpha}}$
\item For every $\alpha$ limit ordinal, $E_\alpha= \displaystyle\cap_{\beta<\alpha} E_\beta $
\end{itemize}

we call the sequence $\{ E_\alpha\}_{\alpha}$ the \emph{sequence of $f$-removed sets} on $E$.

\end{definition}

We remark that since functions in the sequence $\{ f_\alpha \}_{\alpha}$ above are allowed to be defined over disconnected sets, the notion of continuity in the above definition has to be intended with respect to the induced topology relative to $E_{\alpha}$ as a subset of $\mathbb{R}^r$. Moreover, note that it follows from the definition of the sequence that such sequence is decreasing, meaning that $E_\delta \subseteq E_\gamma$ if $\delta > \gamma$ for every $E_\delta$ and $E_\gamma$ in the sequence. 

\begin{remark}[Similar ranking for measuring discontinuities] \ \\
The definition of this sequence, or of slight variations of the same sequence, has already been considered in the literature. For instance, the author of \cite{Her96} selects a version of this sequence where the closure of the sets is taken at each level and relates such sequence of functions with a bound for the topological complexity of any algorithm that computes them while using only comparisons and continuous arithmetic (and information) operations. Similarly, starting from the same transfinite sequence of functions applied to countably based Kolmogorov spaces, it is shown in \cite{deB14} that a given function is at the $\alpha$ level of the hierarchy if and only if it is realizable through the $\alpha$-jump of a representation. 
\end{remark}

Some interesting properties of this sequence are outlined by the two following lemmas. 

\begin{lemma}
\label{lemma:extremes}
For some $r,m \in \mathbb{N}$, consider a compact domain $E \subset \mathbb{R}^r$ and a function $f: E \to \mathbb{R}^m$. Let $\{ E_\gamma \}_{\gamma}$ be the sequence of $f$-removed sets on $E$. If $x \in E_\alpha$ for some $E_\alpha$ in the sequence then $x$ is an accumulation point of $E_\beta$ for all $\beta < \alpha$. 
\end{lemma}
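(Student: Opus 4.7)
The plan is to proceed by transfinite induction on $\alpha$, with the inductive statement being exactly: ``if $x \in E_\alpha$ then $x$ is an accumulation point of $E_\beta$ for every $\beta < \alpha$''. The base case $\alpha = 0$ is vacuous since there is no $\beta < 0$.

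The key observation behind the successor step is that, by the very definition of $D_f$ (Definition \ref{def:disco}), if $x \in D_{f_\alpha}$ then there exists $\epsilon > 0$ such that in every ball $B_{E_\alpha}(x,\delta)$ (in the induced topology on $E_\alpha$) one can find points $y,z$ with $d(f(y),f(z)) > \epsilon$. Since $f(x)=f(x)$, we cannot have $y=z=x$, so at least one of these points lies in $E_\alpha \cap B_{\mathbb{R}^r}(x,\delta)$ and is different from $x$. Hence $x$ is an accumulation point of $E_\alpha$ itself. This is the only ``real content'' of the lemma; the rest is bookkeeping.

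For the successor step, assume the claim holds at level $\alpha$ and take $x \in E_{\alpha+1} = D_{f_\alpha}$. Since $E_{\alpha+1} \subseteq E_\alpha$, the inductive hypothesis gives that $x$ is an accumulation point of $E_\beta$ for every $\beta < \alpha$; combined with the key observation above (which covers $\beta = \alpha$), this proves the statement at $\alpha+1$. For the limit step, let $\alpha$ be a limit ordinal, assume the claim holds at every $\gamma < \alpha$, and take $x \in E_\alpha = \bigcap_{\beta<\alpha} E_\beta$. For any $\gamma < \alpha$, since $\alpha$ is a limit we also have $\gamma+1 < \alpha$, hence $x \in E_{\gamma+1}$, and the inductive hypothesis applied at $\gamma+1$ yields in particular that $x$ is an accumulation point of $E_\gamma$.

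I don't foresee a genuine obstacle: the only subtle point is making sure the discontinuity condition is interpreted in the subspace topology on $E_\alpha$ (as stated in the remark following Definition \ref{def:removed}), and verifying that the $y,z$ produced by the discontinuity cannot both coincide with $x$. Once this is clarified, the rest is a straightforward transfinite induction that threads the inductive hypothesis through both the successor and limit cases.
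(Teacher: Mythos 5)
Your proof is correct and rests on the same core fact as the paper's: a point where $f \restriction_{E_\beta}$ is discontinuous cannot be an isolated point of $E_\beta$ (your ``key observation'' is just this, unwound from Definition~\ref{def:disco}). The only difference is organizational — the paper obtains $x \in D_{f_\beta}$ for every $\beta < \alpha$ directly from the fact that the sequence of $f$-removed sets is decreasing, whereas you thread it through a transfinite induction — so this is essentially the same argument.
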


\begin{proof}
Note that by definition if $x \in E_\alpha$ for some ordinal $\alpha$ that means that for all ordinals $\beta < \alpha$ function $f \restriction_{E_\beta}$ is discontinuous on $x$. Since every function is continuous on every isolated point of its domain of definition, this implies that $x$ is an accumulation point for all $E_\beta$ such that $\beta < \alpha$.
\end{proof}

\begin{lemma}
\label{lemma:somealpha}
For some $r,m \in \mathbb{N}$, consider a compact domain $E \subset \mathbb{R}^r$ and a function $f: E \to \mathbb{R}^m$. Let $\{ E_\gamma \}_{\gamma}$ be the sequence of $f$-removed sets on $E$ and let $E_\alpha = \emptyset$ for some $\alpha$. For every point $x \in E$ there exists an ordinal $\beta<\alpha$ such that $x \in E_\beta \setminus E_{\beta+1}$ and there exists an open ball $B_E (x, \epsilon)$ for some $\epsilon>0$ such that $B_E(x, \epsilon) \cap E_{\beta+1} = \emptyset$.
\end{lemma}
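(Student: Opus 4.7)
The plan is to split the argument into two parts: first establishing the existence of the ordinal $\beta$, then deriving the ball property. For the first part, I would form the set $S = \{\gamma : x \in E_\gamma\}$ and note that $0 \in S$ (since $x \in E_0 = E$) while $\alpha \notin S$ (since $E_\alpha = \emptyset$). The remark following Definition \ref{def:removed} that $\{E_\gamma\}_\gamma$ is decreasing makes $S$ downward-closed among ordinals; I would then show $S$ has a maximum by excluding the possibility that $\sup S$ is a limit ordinal outside $S$. Indeed, if $\lambda = \sup S \notin S$ were a limit, then for each $\gamma < \lambda$ some $s \in S$ would satisfy $\gamma < s$, giving $x \in E_s \subseteq E_\gamma$, whence $x \in \bigcap_{\gamma < \lambda} E_\gamma = E_\lambda$ by the limit clause of Definition \ref{def:removed}, contradicting $\lambda \notin S$. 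Setting $\beta := \max S$ thus yields $\beta < \alpha$, $x \in E_\beta$, and $x \notin E_{\beta+1}$, which establishes the first assertion.

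For the second part, the condition $x \notin E_{\beta+1} = D_{f_\beta}$ says that $f_\beta$ is continuous at $x$ in the subspace topology on $E_\beta$. I would exploit the decomposition $D_{f_\beta} = \bigcup_n D_{f_\beta, n}$ analogous to the one used just before Lemma \ref{lemma:Baire}, with each $D_{f_\beta, n}$ closed in $E_\beta$ by a standard oscillation argument. Combined with the continuity of $f_\beta$ at $x$, I would then produce $\epsilon > 0$ such that $B_E(x, \epsilon) \cap E_\beta$ is disjoint from $E_{\beta+1}$. Since $E_{\beta+1} \subseteq E_\beta$, this conclusion converts directly into $B_E(x, \epsilon) \cap E_{\beta+1} = \big(B_E(x, \epsilon) \cap E_\beta\big) \cap E_{\beta+1} = \emptyset$, as required.

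The main obstacle lies in this second step: in general the continuity points of a function form only a $G_\delta$ subset of its domain, not an open subset, so continuity of $f_\beta$ at $x$ by itself does not immediately carve out an ambient neighborhood free of points of $D_{f_\beta}$. My plan is to proceed by contradiction, assuming a sequence $\{y_n\} \subset E_{\beta+1}$ with $y_n \to x$ in $E$, and then to exploit upper semicontinuity of the oscillation at $x$ together with the terminating hypothesis $E_\alpha = \emptyset$ to derive the contradiction. The assumption $E_\alpha = \emptyset$ is what should constrain how the stratification $\{E_\gamma\}_\gamma$ can stack around a continuity point of $f_\beta$, ruling out the pathological configuration in which $x$ is a continuity point of $f_\beta$ and simultaneously an accumulation point of $E_{\beta+1}$.
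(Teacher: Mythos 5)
Your first half is correct and, if anything, tidier than the paper's own treatment: the paper also reduces the first claim to the existence of a last ordinal $\beta$ with $x \in E_\beta$, but argues it rather loosely, whereas your explicit handling of the limit case through the clause $E_\lambda = \cap_{\gamma<\lambda} E_\gamma$, and the observation that a supremum of the downward-closed set $S$ cannot be a limit ordinal outside $S$, is exactly the argument needed.

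The genuine gap is in your second half, and your own diagnosis of the obstacle is accurate, but the repair you plan cannot succeed. The hypothesis $E_\alpha = \emptyset$ alone does not exclude the configuration in which $f_\beta$ is continuous at $x$ while $x$ is an accumulation point of $E_{\beta+1}$: take $E=[0,1]$, $f(1/n)=1/n$ for all $n \in \mathbb{N}$ and $f=0$ elsewhere. Then $E_1 = D_f = \{1/n : n \in \mathbb{N}\}$, every point of $E_1$ is isolated in the subspace $E_1$, so $E_2=\emptyset$; yet $x=0$ lies in $E_0 \setminus E_1$ and every ball $B_E(0,\epsilon)$ meets $E_1$. (This $f$ is even of class Baire one, so it is the closedness requirement, not the Baire-one requirement, that is at stake.) Consequently no argument combining upper semicontinuity of the oscillation with $E_\alpha=\emptyset$ can yield the contradiction you are after. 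What actually closes the step, in the setting where the lemma is invoked (Theorem \ref{thm:main}), is that $f$ is solvable in the sense of Definition \ref{def:solvable}: by transfinite induction every $E_\gamma$ is then closed in $E$ (successor stages because $D_{f\restriction K}$ is closed for closed $K$, limit stages as intersections of closed sets), and once $E_{\beta+1}$ is closed, $x \notin E_{\beta+1}$ immediately gives a ball $B_E(x,\epsilon)$ disjoint from $E_{\beta+1}$, since the complement of a closed set is open. The paper's proof passes over this by attributing the ball directly to continuity of $f_\beta$ at $x$; you are right that continuity alone gives only a $G_\delta$ of continuity points and no ambient neighborhood, but the missing ingredient is the closedness supplied by solvability, not a finer analysis of the oscillation under $E_\alpha=\emptyset$.
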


\begin{proof}
Let $x \in E$. If $f$ is continuous on $x$ then trivially $x \in E_0 \setminus E_1$. Instead, let $x$ be a point in the set of discontinuity points of $f$, which means that $x \in E_1$. Suppose that there is no $\beta<\alpha$ such that $x \in E_\beta \setminus E_{\beta+1}$. We prove now that this supposition leads to a contradiction. Indeed, since we know that the sequence of $f$-removed sets on $E$ is a decreasing sequence, and since we know by hypothesis that there exists an ordinal $\alpha$ such that $E_{\alpha}= \emptyset$, it follows that the considered sequence of $f$-removed sets is a strictly decreasing sequence, which in turn means that if there is no $\beta<\alpha$ such that $x \in E_\beta \setminus E_{\beta+1}$ we need to have that $x \notin E_\gamma $ for all $E_\gamma$ in the sequence. Hence, we get $x \notin E_1$, a contradiction. 
We now show that if $x \in E_\beta \setminus E_{\beta+1}$ for some $\beta < \alpha$ then there exists an open ball $B_E (x, \epsilon)$ for some $\epsilon>0$ such that $B_E(x, \epsilon) \cap E_{\beta+1} = \emptyset$. By definition of the sequence of $f$-removed sets the fact that $x \in E_\beta \setminus E_{\beta+1}$ implies that function $f_{\beta}$ defined as $f_{\beta}= f \restriction_{ E_\beta}$ is continuous on $x$, and by definition of continuity of $f_{\beta}$ on $E_\beta \setminus E_{\beta+1}$ we know that there exists an open ball $B_E(x, \epsilon)$ for some $\epsilon>0$ such that $B_E(x, \epsilon) \cap E_{\beta+1} = \emptyset$.
\end{proof}

\vspace{3mm}

Since, as proved by the authors of \cite{collins2009effective}, computing the unique solution of continuous IVP is always possible, it is natural to expect that being able to obtain analytically the unique solution of any given discontinuous IVP should be directly related to the amount of discontinuity for the right-hand term $f$, and consequently to the ordinal number of nonempty levels of the above sequence of $f$-removed sets. Moreover, we would like to obtain the solution within a countable number of steps. Hence, we want to pinpoint some sufficient conditions on $f$ that permit us to restrict our attention to these well-behaved classes of discontinuous systems. We therefore propose the following definition. 

\begin{definition}[Solvable function] \label{def:solvable}
For some $r,m \in \mathbb{N}$, consider a compact domain $E \subset \mathbb{R}^r$ and a function $f: E \to \mathbb{R}^m$. We say that $f$ is \emph{solvable} if it is a function of class Baire one such that for every closed set $K \subseteq E$ the set of discontinuity points of the restriction $f \restriction_K$ is a closed set.
\end{definition}

We say that an IVP involving ODEs is \emph{solvable} when the right-hand term of the ODEs involved is a solvable function. 

The choice of the terminology \emph{solvable} for these right-hand terms is made more clear by the coming Theorem \ref{thm:main}. Indeed this theorem plays a key role on implying that solutions of solvable IVPs can be obtained through transfinite recursion.

\begin{theorem}
\label{thm:lastord}
For some $r,m \in \mathbb{N}$, consider a compact domain $E \subset \mathbb{R}^r$ and a function $f: E \to \mathbb{R}^m$. If $f$ is solvable, then there exists an ordinal $\alpha<\omega_1$ such that $E_{\alpha}= \emptyset$.
\end{theorem}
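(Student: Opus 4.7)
My plan is to combine the Cantor--Baire stationary principle (Theorem~\ref{thm:statio}) with the ``empty interior of the discontinuity set'' property for Baire one functions (Lemma~\ref{lemma:Baire}). The proof will proceed in three steps: (i) show by transfinite induction that each $E_\gamma$ is a closed subset of $\mathbb{R}^r$; (ii) apply the Cantor--Baire stationary principle to get a countable ordinal $\alpha$ at which the sequence stabilizes; (iii) argue that this stabilized set must be empty by contradiction, using that the restriction of $f$ to a nonempty closed subset would itself be a Baire one function whose discontinuity set cannot fill the whole subset.

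For step (i), the base case $E_0 = E$ is closed as a compact domain. At a successor ordinal, assuming $E_\gamma$ is closed, solvability of $f$ directly yields that $E_{\gamma+1} = D_{f\restriction_{E_\gamma}}$ is closed. At a limit ordinal, $E_\gamma = \bigcap_{\beta<\gamma} E_\beta$ is closed as an intersection of closed sets. Step (ii) is then immediate: the sequence $\{E_\gamma\}_{\gamma<\omega_1}$ is decreasing (as remarked just after Definition~\ref{def:removed}) and consists of closed subsets of $\mathbb{R}^r$, so Theorem~\ref{thm:statio} provides some $\alpha < \omega_1$ with $E_\beta = E_\alpha$ for all $\beta \geq \alpha$; in particular $E_{\alpha+1} = E_\alpha$.

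For step (iii), assume for contradiction that $E_\alpha \neq \emptyset$. Since $f$ is of class Baire one on $E$, writing $f = \lim_n f_n$ pointwise with continuous $f_n$, the restrictions $f_n \restriction_{E_\alpha}$ are continuous on $E_\alpha$ and converge pointwise to $f\restriction_{E_\alpha}$, so $f\restriction_{E_\alpha}$ is Baire one on the nonempty closed (hence separable and complete) subspace $E_\alpha$ of $\mathbb{R}^r$. By Lemma~\ref{lemma:Baire}, the set of discontinuity points $D_{f\restriction_{E_\alpha}}$ has empty interior in $E_\alpha$. But $E_{\alpha+1} = D_{f\restriction_{E_\alpha}} = E_\alpha$, and $E_\alpha$ is open in itself, so if $E_\alpha$ were nonempty it would have nonempty interior in itself, a contradiction. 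Hence $E_\alpha = \emptyset$.

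The only delicate point---and the place where I expect the argument to need the most care---is the consistent use of the subspace topology: the notion of discontinuity defining $E_{\gamma+1}$, and correspondingly the notion of ``empty interior'' in Lemma~\ref{lemma:Baire}, must both be interpreted relative to the induced topology on $E_\gamma$. Once this topological bookkeeping is made explicit (which also requires the easy remark that the restriction of a Baire one function to a closed subspace remains Baire one), the rest of the argument is a short syllogism between Theorem~\ref{thm:statio} and Lemma~\ref{lemma:Baire}.
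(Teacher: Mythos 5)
Your proposal is correct and takes essentially the same route as the paper: solvability supplies closedness of every $E_\gamma$ (so the Cantor--Baire stationary principle, Theorem~\ref{thm:statio}, applies), and Lemma~\ref{lemma:Baire} applied to the Baire-one restriction $f\restriction_{E_\gamma}$ supplies the empty relative interior of the discontinuity set. The only, harmless, difference is organizational: the paper shows each $E_{\gamma+1}$ is nowhere dense in $E_\gamma$, so the sequence is strictly decreasing while nonempty and stabilization forces emptiness, whereas you invoke the empty-interior lemma just once at the stabilized ordinal to contradict $E_{\alpha+1}=E_\alpha\neq\emptyset$.
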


\begin{proof}
If the set $E_1$ is empty, then the result is trivial. Let us then suppose $E_1 \neq \emptyset$. The hypothesis on function $f$ allows us to apply Lemma \ref{lemma:Baire} to conclude that $E_1$ has empty interior in $E$. Moreover, we know by hypothesis on $f$ that $E_1$ is a closed set and so $E_1$ is nowhere dense in $E$. It follows that $E_1$ is a nonempty, separable complete metric space since it is a closed subset of a separable complete metric space. Therefore, this means that function $f_1 = f \restriction_{E_1}$ is a function of class Baire one defined over a nonempty, separable complete metric space, and hence we know from Lemma \ref{lemma:Baire} that $E_2$ has empty interior in $E_1$. Moreover, by hypothesis, $E_2$ is a closed set and therefore it is nowhere dense in $E_1$. For every ordinal $\beta< \omega_1$, function $f_\beta = f \restriction_{E_\beta}$ is a function of class Baire one since it is the restriction of a function of class Baire one; therefore, we can repeat the above reasoning to show that as a direct consequence of Lemma \ref{lemma:Baire} and the hypothesis on $f$ each set in the sequence of $f$-removed sets on $E$ is nowhere dense in its predecessor (at successor stages). Hence, we know that such sequence is a strictly decreasing sequence, meaning that $E_\beta \subset E_\gamma$ if $\beta > \gamma$ for every $E_\beta$ and $E_\gamma \neq \emptyset$ in the sequence. That means that thanks to Theorem \ref{thm:statio} applied to the sequence of $f$-removed sets on $E$ there exists $\alpha <\omega_1$ such that $E_{\alpha}= \emptyset$ which trivially implies $E_{\beta}= \emptyset$ for all $\beta \geq \alpha$. 
\end{proof}

\vspace{3mm}

It is clear from the above theorem that the definition of solvable functions provides an intuitive way to rank solvable IVPs, where the rank naturally corresponds to the first countable ordinal that leads to the empty set in the sequence of removed sets.

\section{Search method}
\label{sec:search}

Once we have singled out which conditions we must require for the right-hand term $f$, we can present the main tool used to converge to the solution of the IVP. In a similar fashion to the method designed by Denjoy, where the tool to be repeatedly applied was Lebesgue integration, we need to be able to apply such tool for each considered level of the sequence of $f$-removed sets on $E$ until we finally reach the empty set. This is why we created a tool that can be defined for any countable ordinal in a uniform manner. We call the tool $(\alpha)$Monkeys approach in honor of the ten thousand monkey algorithm from \cite{collins2009effective}, since such an algorithm inspires the definition. 

\begin{definition}[$(\alpha)$Monkeys approach]
\label{def:alphamonkeys}

Consider an interval $[a,b] \subset \mathbb{R}$, a compact domain $E \subset \mathbb{R}^r$ for some $r \in \mathbb{N}$ and a right-hand term $f: E \to \mathbb{R}^r$ for an IVP of the form of the one in Equation \ref{eq:pb} with initial condition $y_0$. Let $\{ E_\gamma \}_{\gamma}$ be the sequence of $f$-removed sets on $E$ and let $E_\alpha$ be one set in the sequence for some $\alpha< \omega_1$. We call the \emph{$(\alpha)$Monkeys approach for $(f, y_0)$} the following method: consider all tuples of the form $\left(X_{i, \beta, j }, h_{i, \beta, j }, B_{i, \beta, j }, C_{i, \beta, j }, Y_{i, \beta, j }\right)$ for $i=0, \ldots, l-1$, $\beta < \alpha$, $j=1, \ldots, m_{i,\beta}$, where $h_{i, \beta, j } \in \mathbb{Q}^+$, $l, m_i \in \mathbb{N}$ and $X_{i, \beta, j }$, $B_{i, \beta, j }$, $C_{i, \beta, j }$ and $Y_{i, \beta, j }$ are open rational boxes in $E$. A tuple is said to be valid if $y_0 \in \bigcup_{\beta, j} X_{0, \beta, j }$ and for all $i=0, \ldots, l-1$, $\beta < \alpha$, $j=1, \ldots, m_{i,\beta}$ we have:

\begin{enumerate}
\item Either $\left( B_{i, \beta, j }= \emptyset \right)$ or $\left( \operatorname{cl}(B_{i, \beta, j }) \cap E_\beta \neq \emptyset \; \text{and} \; \operatorname{cl}(B_{i, \beta, j }) \cap E_{\beta+1} = \emptyset  \right)$
\item $ f \restriction_{E_\beta} \left( \operatorname{cl}(B_{i, \beta, j }) \right) \subset C_{i, \beta, j }$;
\item $ X_{i, \beta, j } \cup  Y_{i, \beta, j } \subset B_{i, \beta, j }$;
\item $ X_{i, \beta, j } + h_{i, \beta, j } C_{i, \beta, j }\subset Y_{i, \beta, j }$;
\item $ \bigcup_{\beta, j} Y_{i, \beta, j } \subset \bigcup_{\beta,j} X_{i+1,\beta, j}$;
\end{enumerate}
\end{definition}

We now explain informally the intuition behind such definition. Similarly to the ten thousand monkey algorithm, this definition defines a search method within the space $E$ where the solution of the IVP lives. The search is performed by considering tuples of the form $\left(X_{i, \beta, j }, h_{i, \beta, j }, B_{i, \beta, j }, C_{i, \beta, j }, Y_{i, \beta, j }\right)$ which describe finite sequences of $l$ open sets. Each of these tuples should be thought as an expression of its related finite sequence of $l$ open sets, that is $$\{ \bigcup_{\beta, j} X_{i, \beta, j } \}_{i=0,2, \hdots, l-1}.$$ Each of these open sets is the union of a transfinite collection of open rational boxes. Here we already have the first important difference with the ten thousand monkeys algorithm: while the collection of open rational boxes had to be finite in their case, in our case we allow it to be countably infinite. Indeed, we are considering open rational boxes for each value of $\beta< \alpha$,  while $\alpha$ could be greater than $\omega$. Therefore, some transfinite method is needed in order to consider all the possible tuples that can be described this way within $E$.  

Then, among all the possible tuples considered in this way, we turn our attention to the ones that are \emph{valid}. Stating that one of these tuples is valid means two things: one, that the related sequence starts from a set that contains the initial condition, $y_0 \in \bigcup_{\beta, j} X_{0, \beta, j }$, and two, that the sets in the sequence are concatenated correctly according to the five rules above. These rules are chosen so that the concatenation between the sets is dictated by the action of $f$, but only in a controlled fashion, i.e.\ in a manner that takes care of the portions of the domain where each restriction $f_\beta$ is continuous, for all $\beta < \alpha$. This is clarified by the first item in the above list, whose direct consequence is that $f_\beta$ is continuous on all rational boxes $B_{i, \beta, j }$. This is the second important difference with the ten thousand monkeys algorithm: in the original algorithm the evolution of each of the open rational boxes involved at each step was ruled by the action of a single function, the right-hand term $f$, which was continuous; in our case instead, since $f$ is discontinuous, doing the same could lead to situations in which the evolution of one box could not easily be contained within another box of the same radius. Hence, by trying to use only function $f$ to concatenate the open rational boxes involved at each step, we could end up with trajectories that are too spread away from the actual solution. To solve this complication we decided to give up the condition of using a single function for the concatenation, and instead use a different function $f_\beta$ for each box $B_{i, \beta, j }$. In this way we are sure that the evolution of these rational boxes is always continuous, so that we can control their trajectories while keeping their radius bounded. The tradeoff for this local continuity is that by allowing this stratification we are not sure anymore that every valid tuple contains entirely the solution of the IVP. In other words, we loose the possibility of proving some properties like the ones discussed in Theorem \ref{thm:monkeys} for the ten thousand monkeys algorithm. Nonetheless, by considering valid tuples with smaller and smaller radius, even if neither of them necessarily contains the solution in its entirety, we can still make use of them in order to construct a sequence of continuous functions that eventually converges to the solution in the limit. This sets the premises for the next section, where the method to obtain the solution of the IVP is finally presented in details.

\section{Obtaining the solution} 
\label{sec:main}

We have now all the elements needed to describe the transfinite recursion that obtains analytically the solution of the IVP considered. This is done in the proof of the following theorem. 

\begin{theorem}
\label{thm:main}
Consider a closed interval, a compact domain $E \subset \mathbb{R}^r$ for some $r \in \mathbb{N}$ and a function $f: E \to E$ such that, given an intital condition, the IVP of the form of Equation \ref{eq:pb} with right-hand term $f$ has a unique solution on the interval. If $f$ is a solvable function, then we can obtain the solution analytically via transfinite recursion up to an ordinal $\alpha$ such that $\alpha < \omega_1$. 
\end{theorem}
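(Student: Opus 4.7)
The strategy is to combine the stratification from the sequence of $f$-removed sets with the $(\alpha)$Monkeys approach of Definition \ref{def:alphamonkeys}. Since $f$ is solvable, Theorem \ref{thm:lastord} yields an ordinal $\alpha<\omega_1$ with $E_{\alpha}=\emptyset$; this $\alpha$ is the length of the transfinite recursion. The plan is, for each $n\in\mathbb{N}$, to produce a valid tuple $\mathcal{T}_n$ of the $(\alpha)$Monkeys approach for $(f,y_0)$ whose nonempty rational boxes $B_{i,\beta,j}$ have diameter less than $1/n$ and which spans the full time interval; from each $\mathcal{T}_n$ extract a continuous, piecewise affine approximant $y_n\colon[a,b]\to\mathbb{R}^r$; then invoke Ascoli's theorem together with the uniqueness hypothesis to conclude that $y_n\to y$ uniformly.

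The crucial step is the construction of $\mathcal{T}_n$; this is where the transfinite stratification pays off and is shaped around the (a priori unknown) trajectory $y([a,b])$. By Lemma \ref{lemma:somealpha} every $p\in E$ lies in some $E_{\beta_p}\setminus E_{\beta_p+1}$ with $\beta_p<\alpha$ and admits an open ball disjoint from $E_{\beta_p+1}$, on which $f\restriction_{E_{\beta_p}}$ is continuous by the very definition of the $f$-removed sequence. Applying this pointwise along the compact set $y([a,b])$ and extracting a finite subcover identifies a finite list of active levels $\beta_1,\dots,\beta_k<\alpha$; only these levels will carry nonempty boxes in $\mathcal{T}_n$, the rest of the transfinite tuple being vacuous. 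Within each ball of the cover I pick open rational boxes $B_{i,\beta,j},X_{i,\beta,j},Y_{i,\beta,j},C_{i,\beta,j}$ and rational time-steps $h_{i,\beta,j}$ satisfying items (1)--(5) of Definition \ref{def:alphamonkeys}: item (1) by placing $B_{i,\beta,j}$ away from $E_{\beta+1}$ but meeting $E_\beta$; item (2) by taking $C_{i,\beta,j}$ as a small rational box enclosing $f\restriction_{E_\beta}(\operatorname{cl}(B_{i,\beta,j}))$, which is small because $f\restriction_{E_\beta}$ is continuous on that closure; items (3)--(5) by choosing sufficiently fine time-steps and slightly enlarged $X$ and $Y$ boxes around the trajectory.

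From $\mathcal{T}_n$ I build $y_n$ piecewise affinely: at each step $i$, pick one of the nonempty boxes $B_{i,\beta,j}$ containing the current position and advance for time $h_{i,\beta,j}$ with a constant slope chosen inside $C_{i,\beta,j}$; items (3)--(5) ensure the advance lands inside $Y_{i,\beta,j}\subset B_{i,\beta,j}$ and transfers into some $X_{i+1,\beta',j'}$. Since $f$ takes values in the compact set $E$, the boxes $C_{i,\beta,j}$ are contained in a fixed bounded region, so $\{y_n\}_n$ is uniformly bounded and uniformly Lipschitz, hence equicontinuous. Ascoli's Theorem \ref{thm:ascoli} then gives a uniformly convergent subsequence; the shrinking-diameter control of $\mathcal{T}_n$ around $y$ forces $\|y_n-y\|_\infty\to 0$, and Theorem \ref{thm:differsequence} provides the integral representation of the limit. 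The uniqueness hypothesis identifies this limit as the unique solution $y$. The hard part of the argument is the construction of $\mathcal{T}_n$: it is precisely where the transfinite stratification of $f$ is essential, because along the trajectory the discontinuous $f$ has to be replaced, level by level, with its continuous restrictions $f\restriction_{E_{\beta_l}}$; once the fine tuples exist, the Ascoli-plus-uniqueness closing argument is routine.
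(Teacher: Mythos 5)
Your overall skeleton (removed-set stratification, $(\alpha)$Monkeys tuples of small diameter, piecewise-affine approximants, Ascoli, uniqueness) matches the paper's, but the way you build the tuples introduces a genuine problem. You construct $\mathcal{T}_n$ by covering the compact set $y([a,b])$ with balls obtained from Lemma \ref{lemma:somealpha}, extracting a finite subcover, and placing all boxes ``around the trajectory''. This uses the unknown solution as an ingredient of the construction, which is circular with respect to what the theorem asserts: that the solution can be \emph{obtained} analytically by transfinite recursion. The paper's construction never consults $y$; it selects the boxes $X_{i,\beta,j}$ using only Lemma \ref{lemma:somealpha} applied in $E$, the moduli of continuity $\delta_{i,\beta,j}$ of $f\restriction_{E_\beta}$ on $\operatorname{cl}(X_{i,\beta,j})\cap E_\beta$, and a bound $K$ on the norm over $E$, choosing $\operatorname{rad}(X_{i,\beta,j})<\delta_{i,\beta,j}(\tfrac{1}{2n})-Kh_{i,\beta,j}$ so that $\operatorname{rad}(C_{i,\beta,j})<\tfrac{1}{2n}$; the solution enters only at the very end, through uniqueness, to identify the limit of the approximants. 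Your finite list of ``active levels'' $\beta_1,\dots,\beta_k$ is likewise a by-product of knowing the trajectory and is not available to the method itself.

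The second gap is in your closing step: you claim that ``the shrinking-diameter control of $\mathcal{T}_n$ around $y$ forces $\|y_n-y\|_\infty\to 0$''. Even granting boxes that hug the trajectory, validity of a tuple only confines the approximant to a small neighbourhood of the \emph{set} $y([a,b])$; it does not make $y_n(t)$ close to $y(t)$ at matching times (the approximant can drift along the trajectory at the wrong rate), and the paper explicitly warns that the Theorem \ref{thm:monkeys}-style enclosure property is lost in the discontinuous setting. The correct closing — which you gesture at but do not carry out — is to prove a derivative estimate of the form $\|y_n'(t)-f\restriction_{E_{\beta(i)}}(y_n(t))\|<\tfrac{1}{n}$ whenever $y_n(t)\in E_{\beta(i)}$ (this is exactly what the modulus-of-continuity radius bounds buy), pass to the limit to get $\eta'(t)=f(\eta(t))$ almost everywhere for the Ascoli limit $\eta$ with $\eta(a)=y_0$, and only then invoke uniqueness to conclude $\eta=y$; with that argument in place, the construction needs no reference to the trajectory at all, and the circularity of your first step disappears.
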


\begin{proof}
Let $[a,b]$ be the closed interval such that $y:[a,b] \to E$ is the unique solution of the IVP with right-hand term $f$ and initial condition $y_0 = y(a)$. Let $\{ E_\gamma \}_{\gamma}$ be the sequence of $f$-removed sets on $E$. Since $f$ is a solvable function, by means of Theorem \ref{thm:lastord} we know that there exists an $\alpha < \omega_1$ such that $E_{\alpha}= \emptyset$ and $E_{\beta}= \emptyset$ for all $\beta \geq \alpha$. Therefore by transfinite recursion up to $\alpha$ based on repeated application of $f$ we can consider the whole sequence of $f$-removed sets on $E$. We now show how to obtain the solution $y([a,b])$. We first pick a $n \in \mathbb{N}$ and consider a valid tuple of the $(\alpha)$Monkeys approach for $(f, y_0)$ for this value of $n$, i.e.\ we consider a valid tuple with $\left(X_{i, \beta, j }, h_{i, \beta, j }, C_{i, \beta, j }, Y_{i, \beta, j }\right)$ for $i=0, \ldots, l-1$, $\beta < \alpha$, $j=1, \ldots, m_{i,\beta}$ dependent on this fixed $n$. To do so, consider a set $\bigcup_{\beta, j} X_{0, \beta, j }$ such that $y_0 \in \bigcup_{\beta, j} X_{0, \beta, j }$. Then, for all sets $\bigcup_{\beta, j} X_{i, \beta, j }$ for all $i=0, \ldots, l-1$, note that thanks to Lemma \ref{lemma:somealpha} we can select each open rational box $ X_{i, \beta, j }$ so that it satisfies either $\left( X_{i, \beta, j }= \emptyset \right)$ or $\left( X_{i, \beta, j } \cap E_\beta \neq \emptyset \; \text{and} \; X_{i, \beta, j } \cap E_{\beta+1} = \emptyset  \right)$ for all $i=0, \ldots, l-1$, $\beta< \alpha$, $j=1, \ldots, m_{i,\beta}$. Moreover, because every continuous function on a compact domain is uniformly continuous, we can define for all $i=0, \ldots, l-1$, $\beta< \alpha$, $j=1, \ldots, m_{i,\beta}$ the function $\delta_{i, \beta, j }: \mathbb{R}^{+} \to \mathbb{R}^{+}$ to be a modulus of continuity of $f \restriction_{E_\beta}$ on $\operatorname{cl} \left( X_{i, \beta, j } \right) \cap E_\beta$, i.e.\ a function such that $\left\Vert f \restriction_{E_\beta} (x) - f \restriction_{E_\beta} (z) \right\Vert < \delta_{i, \beta, j} ( \left\Vert x - z\right\Vert ) $ for all $x,z \in \operatorname{cl} \left( X_{i, \beta, j } \right) \cap E_\beta$, for all $i=0, \ldots, l-1$, $\beta< \alpha$, $j=1, \ldots, m_{i,\beta}$. By convention, for all $\epsilon >0$, if there are no points $x,z \in\operatorname{cl} \left( X_{i, \beta, j } \right) \cap E_\beta$ such that $\left\Vert x - z\right\Vert< \epsilon $ then we define $\delta_{i, \beta, j}(\epsilon)=\epsilon$. Let us now call $K \in \mathbb{Q}^+$ a rational such that $\max_{x \in E} \left\Vert x \right\Vert < K$. At this point, by taking the partition suffiently small, we can make sure to consider each nonempty open rational box $X_{i, \beta, j}$ with rational $h_{i,\beta,j}$ such that $0 < \operatorname{rad}(X_{i,\beta,j}) < \delta_{i, \beta, j} (\frac{1}{2n}) - K h_{i,\beta,j}$ and such that its $K h_{i,\beta,j}$ neighbourhood has no intersection with $E_{\beta+1}$ and has the same modulus of continuity $\delta_{i, \beta, j}$. Take then each one of these neighbourhoods as the set $B_{i,\beta,j}$ for all $i=0, \ldots, l-1$, $\beta< \alpha$, $j=1, \ldots, m_{i,\beta}$. It follows that $\operatorname{rad}(B_{i,\beta,j}) < \operatorname{rad}(X_{i,\beta,j}) + K h_{i,\beta,j} < \delta_{i, \beta, j} (\frac{1}{2n})$. We then choose open rational boxes $C_{i,\beta,j}$ such that they satisfy $f \restriction_{E_\beta}(\operatorname{cl} (B_{i,\beta,j})) \subset C_{i,\beta,j}$. Note that, by definition of the moduli of continuity, we can choose these boxes so that $\operatorname{rad}\left( C_{i,\beta,j} \right)< \frac{1}{2n}$. From these choices it follows that we can pick open rational boxes $Y_{i,\beta,j}$ satisfying $X_{i,\beta,j}+ h_{i,\beta,j} C_{i,\beta,j} \subset Y_{i,\beta,j}$ and $\operatorname{rad}\left( Y_{i,\beta,j} \right)< \delta_{i, \beta, j} (\frac{1}{2n})$ for all $i=0, \ldots, l-1$, $\beta < \alpha$, $j=1, \ldots, m_{i,\beta}$. Finally, we consider as the set $\bigcup_{\beta, j} X_{i+1, \beta, j }$ a set such that $\bigcup_{\beta, j} Y_{i, \beta, j } \subset \bigcup_{\beta, j} X_{i+1, \beta, j }$ for all $i=0, \ldots, l-1$. It is clear that the tuple described this way is a valid tuple of the $(\alpha)$Monkeys approach for $(f, y_0)$. 

Let us now define two sequences $\{ h_{i, \beta(i), j(i)} \}_{i=0,\ldots, l-1}$ and $\{ t_i \}_{i=0,\ldots, l}$ where $t_0=a$ and $t_i= a+ \sum_{k=0}^{i-1} h_{k, \beta(k), j(k)}$ for all $i=1,\ldots, l$ and a piecewise linear function $\eta_{n}: [a, t_l] \to E$ such that $\eta_{n}(a)=y_0$ and such that for all $i<l$ we have $\eta_{n}(t_i) \in X_{i, \beta(i), j(i) }$ and $\eta_{n}(t)=\eta_{n}(t_i) + (t-t_i)c_{i,\beta(i), j(i)}$ for some $c_{i,\beta(i), j(i)} \in C_{i,\beta(i), j(i)}$, for all $t_{i}< t \leq t_{i+1}$ for all $i=0,\ldots, l-1$. Note that this function is well defined because $ \left\vert t_{i+1} - t_{i}\right\vert = h_{i,\beta(i), j(i)}$ for all $i=0,\ldots, l-1$ and so it follows that $\eta_{n}(t) \in Y_{i,\beta(i), j(i)} \subset \bigcup_{\beta, j} X_{i+1, \beta, j }$ for all $t_{i}< t \leq t_{i+1}$. In other words, we can always choose the sequences in such a way that function $\eta_{n}$ is well defined. Moreover, note that $\eta_{n}'(t) =c_{i,\beta(i), j(i)} \in C_{i, \beta(i), j(i)}$ for all $t_{i}< t < t_{i+1}$, for all $i=0,\ldots, l-1$; note also that since $\eta_{n}(t) \in B_{i,\beta(i), j(i)}$ we have $f_{\beta(i)}(\eta_{n}(t)) \in C_{i,\beta(i), j(i)}$ for all $t_{i}< t < t_{i+1}$, for all $i=0,\ldots, l-1$. Therefore we have $\left\Vert \eta_{n}'(t) - f _{\beta(i)}(\eta_{n}(t)) \right\Vert \leq \operatorname{diam}\left( C_{i,\beta,j} \right) < \frac{1}{n}$ for all $t_{i}< t < t_{i+1}$ such that $\eta_{n}(t) \in E_{\beta(i)}$, for all $i=0,\ldots, l-1$.

Suppose we have just considered a valid tuple of the $(\alpha)$Monkeys approach for $(f, y_0)$ for a fixed value $\bar{n} \in \mathbb{N}$ following the above procedure and we have defined function $\eta_{\bar{n}}: [0,t_l] \to E$ in the way described. Let us indicate this $t_l $ with the symbol $T$. It is clear that we can consider a new valid tuple and a new function $\eta_{n}: [0,T] \to E$ for each value of $n > \bar{n}$ while maintaining the same domain for each function. We can then consider a sequence of the functions $\{ \eta_{n} \}_{n > \bar{n}}$ as defined above, where each function in the sequence is defined based on the valid tuple $\left(X_{i, \beta, j }, h_{i, \beta, j }, C_{i, \beta, j }, Y_{i, \beta, j }\right)$ with $\operatorname{rad}\left( X_{i,\beta,j} \right)< \delta_{i,\beta,j} ( \frac{1}{2n}) $, $\operatorname{rad}\left( B_{i,\beta,j} \right)< \delta_{i,\beta,j} ( \frac{1}{2n}) $ and $\operatorname{rad}\left( C_{i,\beta,j} \right)< \frac{1}{2n}$ for all $n > \bar{n}$, $i=0, \ldots, l-1$, $\beta < \alpha$, $j=1, \ldots, m_{i,\beta}$ as described above. We want to show that such sequence $\{ \eta_{n} \}_{n > \bar{n}}$ is uniformly bounded and equicontinuous. To prove that it is uniformly bounded we need to prove that there exists a constant $R \in \mathbb{R}$ such that $\left\Vert \eta_{n}(t) \right\Vert \leq R$ for all $n > \bar{n}$, for all $a \leq t \leq T$. This is indeed trivial since $\eta_{n}(t) \in \bigcup_{i,\beta, j} X_{i, \beta, j }$ for all $n > \bar{n}$, for all $a \leq t \leq T$ and each open rational box $X_{i, \beta, j } \subset E$ for all $n > \bar{n}$, $i=0, \ldots, l-1$, $\beta< \alpha$, $1 \leq j \leq m_{i,\beta}$. For equicontinuity it is enough to prove that there exists a constant $M \in \mathbb{R}$ such that $\left\Vert \eta_{n}(\tilde{t}) - \eta_{n}(t)  \right\Vert \leq M \left\vert t - \tilde{t}\right\vert $ for all $n > \bar{n}$, for all $a \leq t, \tilde{t} \leq T$. The existence of $M$ follows from the fact that the sequence is uniformly bounded together with the fact that $ \left\Vert \eta_n'(t) \right\Vert < K$ for all $n > \bar{n}$, for almost all $a \leq t \leq T$. Therefore, since the sequence is uniformly bounded and equicontinuous, we can apply Ascoli's theorem together with Theorem \ref{thm:uniform} in order to conclude that the sequence $\{ \eta_{n} \}_{n > \bar{n}}$ has a subsequence $\{ \eta_{n(u)} \}_{u > \bar{n}}$ that converges uniformly on $[a,T]$ to a continuous function $\eta: [a, T] \to E$. Moreover, thanks to Theorem \ref{thm:differsequence}, we know that function $\eta$ is differentiable almost everywhere on $[a,T]$. Note that by taking limit of $n \to \infty$ we have $l \to \infty$ and $\operatorname{rad}\left( B_{i,\beta,j} \right), \operatorname{rad}\left( C_{i,\beta,j} \right) \to 0$ and $h_{i,\beta,j} \to 0$ for all $i=0, \ldots, l-1$, $\beta< \alpha$, $j=1, \ldots, m_{i,\beta}$. Therefore the inequality $\left\Vert \eta_{n}'(t) - f \restriction_{E_{\beta(i)}}(\eta_{n}(t)) \right\Vert \leq \operatorname{diam}\left( C_{i,\beta,j} \right) < \frac{1}{n}$ for all $t_{i}< t < t_{i+1}$ such that $\eta_{n}(t) \in E_{\beta(i)}$, for all $i=0,\ldots, l-1$, leads to the equation $ \eta'(t) = f(\eta(t))$ for almost all $t \in [a,T]$. Since $\eta(a)= y_0$, continuity and unicity of the solution of the IVP imply $\eta(t)=y(t)$ for all $t \in [a,T]$. Specifically this means that any convergent subsequence converges to the same function, which is precisely the solution $y$ of the IVP. 

Finally, to obtain the solution over the whole domain $[a,b]$, it is sufficient to consider the initial valid tuple of the $(\alpha)$Monkeys approach for $(f, y_0)$ as described above but in such a way that, when defining the sequence of times $\{ t_i \}_{i=0,\ldots, l}$ we have $t_l \geq b$ and then take $T=b$. This is always possible due to the definition of a valid tuple and the fact that $f$ is bounded within $E$.

\end{proof}

\section{Examples}
\label{sec:examples}

In this section we construct many examples of solvable IVPs involving discontinuous ODEs. We start with a simple case that will be later used as a building block to derive more and more complex examples. We stress once again here that none of the most used methods proposed in analysis is able to obtain as a limit the solution of the following example, since every argument based on existence of a fixed point in the space of bounded, continuous functions can not be applied due to the discontinuities in the domain. 

\begin{example}[A discontinuous IVP with a unique solution]
\label{ex:firstex}
As the simplest case of discontinuous IVP, we consider the following example: let $E= [-5,5] \times [-15,15]$ and define the function $f: E \to E$ as 
$$f(x,z)=\left\{ \begin{array}{ll}
\left(1,2x \sin{\left( \frac{1}{x}\right)} - \cos{\left( \frac{1}{x}\right)}\right) & \mbox{ if } x\neq 0 \\
  (1,0) & \mbox{ otherwise} \end{array} \right.$$
   It is easy to see that $f$ is a function of class Baire one, i.e.\ it is the pointwise limit of a sequence of continuous functions. Note also that in this case the set of discontinuity points of function $f$ on $E$ is the closed set $D_f= \{ (0,z)$ for $z \in [-15,15] \}$.  Then consider the following IVP, with $y: [-2,2] \to \mathbb{R}^2$ and $y_0=\left( -3, 9 \sin{ \left(-\frac{1}{3}\right)}\right)$:

\begin{equation}
\label{eq:derivODE2dimens}
\begin{cases}
y'(t)= f(y(t))\\
y(-2)= y_0
\end{cases}
\end{equation}

It is easy to verify that the solution of such a system is unique, and it is for the first component: $y_1(t)= t- 1$, and for the second component $y_2 (t)= (t-1)^2 \sin{\left(\frac{1}{t-1}\right)}$ for $t \neq 1$ and $y_2(1)=0$. Therefore the solution $y: [-2,2] \to \mathbb{R}^2$ is differentiable and can be expressed as the unique solution of the IVP above with right-hand side $f$ discontinuous on $E$. Note that the only discontinuity of $f$ encountered by the solution is the point $(0,0)$, i.e. $D_f \cap y([-2,2])= (0,0)$. Moreover, it is easy to see that the right-hand term $f$ is a solvable function. Indeed, $f$ is trivially a function of class Baire one, while $f \restriction_{D_f}$ is a constant function identically equal to $(1,0)$ and therefore continuous everywhere on its domain. This also implies $E_2=\emptyset$, where $E_2$ is the second set in the sequence of $f$-removed sets on $E$. 
\end{example}



The construction of this simple example (and of the solution of this classical exercise) is based on the well-known fact that the real function $f(x)=x^2 \sin{ \left( \frac{1}{x} \right)}$ if $x\neq 0$ and $f(0)=0$ is differentiable over $[0,1]$ and its derivative is bounded and discontinuous in $0$. Moreover, we avoided some problems that arise for mono-dimensional ODEs with null derivative by introducing a \emph{time} variable $y_1$ whose role is to prevent the system from stalling and ensure the unicity of the solution. 

The concept behind such an example can be easily generalized. 

\begin{example}[Converting complex derivatives into complex IVPs] \ \\
\label{ex:derivODE}
Whenever we consider a differentiable function $g: [a,b] \to \mathbb{R}$ such that $g(a)=g_0$ and with bounded derivative $g': [a,b] \to \mathbb{R}$ we can obtain such function as a solution of an IVP of the type of the one in Equation \ref{eq:pb} by constructing a system as the following: 

\begin{equation}
\label{eq:derivODE}
\begin{cases}
y'_1(t)= 1\\
y'_2(t)= g' (y_1(t)) \\
\end{cases}
\begin{cases}
y_1(a)= a \\
y_2(a)= g_0
\end{cases}
\end{equation}
\end{example}

This consideration allows us to construct examples for which the set of discontinuity points of the right-hand term on the domain is more and more sophisticated. Indeed, this can be done by simply selecting more and more elaborate discontinuous derivatives in Equation \ref{eq:derivODE}. 

But before presenting some of these cases, we make use of the technique introduced by the above example to construct an IVP with unique solution for which the function used as right-hand term is not a solvable function. Specifically, this means that the class of solvable systems is a proper subclass within the discontinuous systems with unique solutions. 

\begin{example}[Non solvable IVP with unique solution]
\hfill We introduce an \newline IVP of the type of the one in Equation \ref{eq:pb} with unique solution and such that the set of discontinuity points of its right-hand term function is not a closed set. The latter means that our transfinite method proposed in the previous section can not be applied to this particular case. In order to describe this case, we need the notion of \emph{nowhere monotone} function. Intuitively, these are functions over the reals whose sign does change in every interval of their domain. 

\begin{definition}[Nowhere monotone]
Let $f : \mathbb{R} \to \mathbb{R}$, we say that $f$ is \emph{nowhere monotone} if there is no interval $I \subseteq \mathbb{R}$ such that $f(x) \geq 0$ for all $x \in I$ or $f(x) \leq 0$ for all $x \in I$.
\end{definition} 

The reason to introduce here the notion of nowhere monotone functions is that there exist nowhere monotone differentiable functions with bounded derivatives. The existence of such functions has been proved by Weil \cite{WE76} as an application to the Baire cathegory theorem. Many examples of such derivatives have been described in literature, whose construction is often quite technical. As examples the reader can consider the one presented by Pereno (1897) as reproduced in \cite{Hobs21} (pages 412-421), as well as the one in \cite{Kat74}. 

We can now state the following lemma:

\begin{lemma}
\label{lemma:notclose}
Let $f :\mathbb{R} \to \mathbb{R}$ be a differentiable, nowhere monotone function. Then the set of discontinuity points of its derivative is not a closed set. 
\end{lemma}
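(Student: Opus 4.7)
The plan is to derive a contradiction from assuming that $D_{f'}$ is closed, by combining the nowhere-monotonicity hypothesis with the meagerness properties of sets of discontinuity of Baire class one functions established in Lemma \ref{lemma:Baire}.

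First, I would recall that since $f$ is differentiable on $\mathbb{R}$, its derivative $f'$ is a function of class Baire one (derivatives are pointwise limits of continuous difference quotients). Hence, by Lemma \ref{lemma:Baire}, the set $D_{f'}$ is meager in $\mathbb{R}$.

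Next, I would establish the key structural fact: at every continuity point $x$ of $f'$, one must have $f'(x) = 0$. Indeed, if $f'(x) > 0$ and $f'$ is continuous at $x$, then $f' > 0$ on some open interval $I \ni x$, which forces $f$ to be strictly increasing on $I$, contradicting that $f$ is nowhere monotone. The case $f'(x)<0$ is symmetric. Consequently, on the complement $C_{f'} = \mathbb{R} \setminus D_{f'}$ of the discontinuity set, the derivative $f'$ vanishes identically.

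The third step is to show $D_{f'}$ is dense in $\mathbb{R}$. On any open interval $I$, $f'$ cannot be identically zero (else $f$ would be constant, hence monotone, on $I$), so there exists $x \in I$ with $f'(x) \neq 0$; by the previous paragraph $x \in D_{f'}$. Hence every nonempty open interval meets $D_{f'}$.

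To conclude, suppose for contradiction that $D_{f'}$ is closed. Being closed and dense in $\mathbb{R}$, it must equal $\mathbb{R}$. But then $\mathbb{R}$ itself is meager (by step one), contradicting the fact that $\mathbb{R}$, as a nonempty complete metric space, is a Baire space (Corollary \ref{coro:Baire}). Therefore $D_{f'}$ cannot be closed.

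The only mild subtlety is the first step, where I silently invoke the classical fact that derivatives are Baire one; this is standard (take $f_n(x) = n(f(x + 1/n) - f(x))$, which are continuous and converge pointwise to $f'$) and is already acknowledged in the paper right after the definition of Baire one. The rest is a clean application of the framework already developed.
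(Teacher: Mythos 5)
Your proof is correct and follows essentially the same route as the paper: continuity points of $f'$ must be zeros of $f'$ (by the nowhere-monotonicity of $f$), hence $D_{f'}$ is dense, and since $f'$ is of class Baire one, Lemma \ref{lemma:Baire} (meagerness/empty interior of $D_{f'}$) rules out $D_{f'}=\mathbb{R}$, so $D_{f'}$ cannot be closed. The only difference is your density step, which is a bit more direct—you simply note that $f'$ cannot vanish identically on any interval—whereas the paper shows density of both the zero set of $f'$ and its complement via the mean value theorem and Darboux's theorem; both arguments are valid.
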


\begin{proof}
As usual let us call $D_{f'}$ the set of discontinuity point of the derivative of $f$. Let us define the set $N= \{ x : f'(x)= 0\}$ and let $S$ be the complement of $D_{f'}$, i.e.\ $S$ is the set of points in which function $f'$ is continuous. We first show that $S \subset N$. Indeed, pick any point $x_0 \in S$ and assume that $f'(x_0) \neq 0$. Then, by definition of continuity, there exists an $\epsilon >0$ such that for all $x \in (x_0 -\epsilon, x_0+\epsilon)$ we have $f'(x) \neq 0$. This implies a contradiction, since $f$ is nowhere monotone. Therefore, $S \subset N$. We now show that both $N$ and its complement $\mathbb{R} \setminus N$ are dense. Since $f$ is nowhere monotone, for each interval $I$ there exist at least two points $x_1$,$y_1 \in I$ with $x_1 < y_1$ such that $f(x_1) < f(y_1)$ and two points $x_2$,$y_2$ with $x_2 < y_2$ such that $f(x_2) > f(y_2)$. Then, by the mean value theorem, this implies that there exist two points $z_1 \in (x_1, y_1)$, and $z_2 \in (x_2, y_2)$ such that $f'(z_1)>0$ and $f'(z_2)<0$. That means that $z_1, z_2 \in \mathbb{R} \setminus N$, and hence $\mathbb{R} \setminus N$ is dense. Moreover, since $f'$ is a derivative, due to Darboux theorem we also know that there is a point $z$ between $z_1$ and $z_2$ such that $f'(z)=0$. This implies that $N$ is dense. 
Note now that since $S \subset N$, we have $\mathbb{R} \setminus N \subset D_{f'}$. Therefore, because $D_{f'}$ contains a dense set, $D_{f'}$ is also dense. This implies that either $D_{f'}$ is not a closed set, either $D_{f'}=\mathbb{R}$. Since $f'$ is a derivative, it is a function of class Baire one. Therefore, due to Lemma \ref{lemma:Baire} we know that $D_{f'}$ has empty interior. Hence, $D_{f'} \neq \mathbb{R}$ and so $D_{f'}$ is not a closed set. 
\end{proof}

\vspace{3mm}

We note that the above proof applies as well to functions defined over any closed subinterval of $\mathbb{R}$. Hence, if starting from the bounded derivative of any differentiable, nowhere monotone function, we build an IVP with the technique of Example  \ref{ex:derivODE}, then its solution is unique while the set of discontinuity points of the right-hand term is not a closed set due to the above lemma. 
\end{example}

At this point we start discussing more complicated examples constructed with the technique of Example \ref{ex:derivODE} and which are based on the building block presented in Example \ref{ex:firstex}. 


\begin{example}[Cantor set of discontinuities]
\label{ex:secondex}

Let us now construct the case of a right-hand function $f$ for an IVP defined on a compact domain $E$ for which $D_f$, the first set in the sequence of $f$-removed sets on $E$, is homeomorphic to the Cantor set. This case is considerably more complex compared to the previous one, from a technical standpoint, but it is theoretically based on the same concept. Indeed, the idea is to make use of the discontinuous derivative just seen for the previous case, and copy it inside the Cantor set. This is done in a similar fashion to what happens when defining Volterra's function \cite{Br08}.

Concretely, consider the Cantor set $C$. Then  define the interval $I_{m,n}=(a_{m,n}, b_{m,n})$ as the $m$-th interval removed from $[0,1]$ at the $n$-th step of the construction of $C$, where $n \in \mathbb{N}_0$. Note that at each step $n \in \mathbb{N}_0$ there are $2^{n}$ removed intervals, i.e.\ $m \in \{ 1, 2,  ..., 2^{n} \}$. For each of the $I_{m,n}$ interval, define function $g_{m,n}: [0,1] \to \mathbb{R}$ by cases as:

\begin{equation}
\begin{cases}
g_{m,n}(x)= (x-a_{m,n})^2 \sin \left( \frac{1}{x-a_{m,n}}\right) & \text{  if  } x \in (a_{m,n}, \bar{x}_{m,n}] \\
g_{m,n}(x)= (\bar{x}_{m,n}-a_{m,n})^2 \sin \left( \frac{1}{\bar{x}_{m,n}-a_{m,n}}\right) & \text{  if  } x \in [\bar{x}_{m,n}, b_{m,n} - \bar{x}_{m,n}] \\
g_{m,n}(x)= -(x-b_{m,n})^2 \sin \left( \frac{1}{x-b_{m,n}} \right) &  \text{  if  } x \in [b_{m,n} - \bar{x}_{m,n}, b_{m,n}) \\
g_{m,n}(x)= 0 & \text{ otherwise}
\end{cases}
\end{equation}

where the point $\bar{x}_{m,n}$ is the greatest point in $I_{m,n}$ such that: $\bar{x}_{m,n} < \frac{a_{m,n}+b_{m,n}}{2}$ and such that: $$ 2(\bar{x}_{m,n}-a_{m,n}) \sin{ \left( \frac{1}{\bar{x}_{m,n}-a_{m,n}}\right) } - \cos{ \left( \frac{1}{\bar{x}_{m,n}-a_{m,n}} \right)}=0.$$ 

Note that each of these functions is differentiable and that the derivative is discontinuous in $a_{m,n}$ and $b_{m,n}$. Indeed, the expression of the derivative is the following: 

{\small
\begin{equation}
\begin{cases}
g'_{m,n}(x)= 2(x-a_{m,n}) \sin{ \left( \frac{1}{x-a_{m,n}} \right)} - \cos{\left( \frac{1}{x-a_{m,n}} \right)} & \text{  if  } x \in (a_{m,n}, \bar{x}_{m,n}]\\
g'_{m,n}(x)= 0 & \text{  if  } x \in [\bar{x}_{m,n}, b_{m,n} - \bar{x}_{m,n}] \\
g'_{m,n}(x)= -2(x-b_{m,n}) \sin{\left( \frac{1}{x-b_{m,n}} \right)} + \cos{\left( \frac{1}{x-b_{m,n}} \right) } & \text{  if  } x \in [b_{m,n} - \bar{x}_{m,n}, b_{m,n}) \\
g'_{m,n}(x)= 0 & \text{ otherwise}
\end{cases}
\end{equation}
}

At this point we can construct a function $g: [0,1] \to \mathbb{R}$ in the following way: 

\begin{equation}
g(x)= \displaystyle\sum_{n=0}^{\infty} \displaystyle\sum_{m=1}^{2^n} g_{m,n}(x)
\end{equation}

First note that due to Ascoli's theorem the sequence $\{ G_{k} \}_k$ of continuous functions defined as $G_k (x)= \displaystyle\sum_{n=0}^{k} \displaystyle\sum_{m=1}^{2^n} g_{m,n}(x)$ for all $x \in [0,1]$, for all $k \in \mathbb{N}$ converges uniformly to the continuous function $g$. We now discuss the differentiability of function $g$ by means of the following lemma:

\begin{lemma}
The function $g$ defined as above is differentiable and its derivative $g'$ is discontinuous only on the Cantor set $C$.
\end{lemma}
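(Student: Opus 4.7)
The plan is to exploit the pairwise disjointness of the removed intervals: for every $y \in [0,1]$ at most one summand $g_{m,n}(y)$ is non-zero, so on the interior of any $I_{m,n}$ the function $g$ coincides locally with $g_{m,n}$, and on the Cantor set $C$ it vanishes. From this I would split the analysis into three parts: (i) at interior points of each removed interval, differentiability of $g$ and continuity of $g'$ follow directly from the differentiability of $g_{m,n}$ and from the continuity of $g'_{m,n}$ on $(a_{m,n}, b_{m,n})$; (ii) at every $x \in C$, $g$ is differentiable with $g'(x) = 0$; (iii) at every $x \in C$, $g'$ is in fact discontinuous.

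Part (ii) is where the argument really lives. Fix $x \in C$, so that $g(x) = 0$, and estimate $|g(y)/(y-x)|$ as $y \to x$. If $y \in C$ the quotient vanishes, so assume $y$ lies in the interior of some $I_{m,n}$. The key estimate is
\[
|g_{m,n}(y)| \leq \min\bigl((y - a_{m,n})^2, (b_{m,n} - y)^2\bigr),
\]
which holds on the two lateral branches by the explicit $(\cdot)^2 \sin(1/\cdot)$ form, and on the flat middle branch using the majoration $(\bar{x}_{m,n} - a_{m,n})^2 \leq \min\bigl((y - a_{m,n})^2, (b_{m,n} - y)^2\bigr)$ (both $y - a_{m,n}$ and $b_{m,n} - y$ exceed $\bar{x}_{m,n} - a_{m,n}$ on the middle, which is symmetric about the midpoint of $I_{m,n}$). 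Combining this with the geometric bound $|y - x| \geq \min(y - a_{m,n}, b_{m,n} - y)$, valid whether $x$ is an endpoint of $I_{m,n}$ or strictly separated from it, yields $|g(y)/(y-x)| \leq \min(y - a_{m,n}, b_{m,n} - y) \leq |I_{m,n}|$. Finally, as $y \to x$ through the interiors of removed intervals $I_{m_k,n_k}$, one must have $|I_{m_k,n_k}| \to 0$ (otherwise an interval of positive length would accumulate at $x$, forcing $x$ to be one of its endpoints, in which case the quotient vanishes for a simpler reason), so $g'(x) = 0$.

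For part (iii), I would use the explicit oscillation of $g'_{m,n}$ near each endpoint: along points with $y - a_{m,n} = 1/(2k\pi)$ we have $\cos(1/(y-a_{m,n})) = 1$ and $\sin(1/(y-a_{m,n})) = 0$, so $g'_{m,n}(y) = -1$. Since the endpoints of the removed intervals are dense in $C$, for any $x \in C$ one can construct a sequence $y_k \to x$ by picking, near each of a sequence of endpoints $a_{m_k,n_k} \to x$, an interior point of $I_{m_k,n_k}$ at which $g'$ equals $-1$. Then $g'(y_k) \to -1 \neq 0 = g'(x)$, proving that $g'$ is discontinuous at every point of $C$ and, together with part (i), that $C$ is exactly the discontinuity set of $g'$.

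The main obstacle I anticipate is part (ii) in the case where $y$ approaches $x$ while staying inside removed intervals that do \emph{not} have $x$ as an endpoint: the crude bound $|y - x| \geq \operatorname{dist}(x, I_{m,n})$ is too weak by itself. The way around it is the quadratic-in-gap-size estimate on $|g_{m,n}(y)|$, which, combined with the elementary identity $\min(u^2,v^2)/\min(u,v) = \min(u,v)$, exactly compensates the possibly small separation between $x$ and $I_{m,n}$; verifying the corresponding bound on the flat middle branch (which requires the symmetric reinterpretation of the middle region) is the only delicate computation, and everything else follows mechanically.
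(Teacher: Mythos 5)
Your proposal is correct, and for the differentiability half it takes a genuinely different route from the paper. The paper gets differentiability of $g$ in one stroke: it asserts that the partial sums $R_k=\sum_{n\le k}\sum_m g'_{m,n}$ converge uniformly to $r=\sum_{n,m}g'_{m,n}$ and invokes the differentiable limit theorem (Theorem \ref{thm:difflimit}) applied to the $G_k$, so that $g'=r$; the discontinuity on $C$ is then argued exactly as in your part (iii), combining $g'=0$ on $C$ with points inside every $I_{m,n}$ where the derivative takes a fixed nonzero value ($+1$ in the paper, $-1$ in your choice of phase) and the density of endpoints of removed intervals in $C$. You instead prove $g'(x)=0$ at each $x\in C$ by a direct difference-quotient estimate, via $|g_{m,n}(y)|\le\min\bigl((y-a_{m,n})^2,(b_{m,n}-y)^2\bigr)$ together with $|y-x|\ge\min(y-a_{m,n},b_{m,n}-y)$. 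This local argument is not merely an alternative: it is the more robust one, because the uniform convergence the paper leans on is delicate — the cosine term keeps $\sup_{I_{m,n}}|g'_{m,n}|$ of order $1$ for every $n$, so the tails $r-R_k$ do not shrink in sup-norm, and a pointwise estimate at the Cantor points of exactly the kind you give is what is really needed there; what the paper's route buys, when it applies, is brevity and no case analysis on the branches of $g_{m,n}$. Two small remarks: your two bounds already combine to $|g(y)/(y-x)|\le\min(y-a_{m,n},b_{m,n}-y)\le|y-x|$, so the concluding step about the lengths $|I_{m_k,n_k}|$ shrinking is correct but unnecessary; and your symmetric reading of the middle branch of $g_{m,n}$ (so that both $y-a_{m,n}$ and $b_{m,n}-y$ dominate $\bar x_{m,n}-a_{m,n}$ there) is indeed the intended construction.
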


\begin{proof}
Note that the sequence $\{ R_{k} \}_k$ of functions defined as $R_k (x)= \displaystyle\sum_{n=0}^{k} \displaystyle\sum_{m=1}^{2^n} g'_{m,n}(x)$ for all $x \in [0,1]$, for all $k \in \mathbb{N}$, converges uniformly to function $r =\displaystyle\sum_{n=0}^{\infty} \displaystyle\sum_{m=1}^{2^n} g'_{m,n}$. Therefore, the differentiable limit theorem presented in Theorem \ref{thm:difflimit} tells us that function $g$ is differentiable with derivative $g'=r$. We now want to show that $g'$ is discontinuous exactly at each point of the Cantor set $C$. First, note that if $x \in I_{m,n}$ for any $m,n$, it is easy to check that $g'(x)=g'_{m,n}(x)$ since the intervals $I_{m,n}$ are pairwise disjoint, and hence $g'$ is continuous on $I_{m,n}$ since $g'_{m,n}$ is. We are then left with analyzing the continuity of function $g'$ on $C$. Recall that each point in the Cantor set $C$ is an accumulation point both for $C$ and for its complement $[0,1] \setminus C$. Moreover, note now that, for each $I_{m,n}$ there is at least one point $x \in I_{m,n}$ such that $g'(x) = 1$. This indeed follows from the above expression of each function $g'_{m,n}$. Therefore, for all points $c \in C$, we can build a sequence of points $\{ x_n \}_n$ such that $g'(x_n) = 1$ for all $n \in \mathbb{N}$ and such that the sequence converges to $c$. In the same way, since for all $x \in C$ we have $g'(x)=0$, we can also build a sequence of points $\{ y_n \}_n$ such that that  $g'(y_n) = 0$ for all $n \in \mathbb{N}$ and such that the sequence converges to $c$. This implies that function $g'$ is discontinuous on $c$ for all $c \in C$, proving the lemma. \end{proof}

\vspace{3mm}

At this point, by using function $g'$ just defined as the derivative involved with the construction of an IVP of the type of Equation  \ref{eq:derivODE}, it is clearly possible to construct an IVP on a compact domain $E \subset \mathbb{R}^2$ with a right-hand term $f$ for which we have that $D_f$, the first set in the sequence of $f$-removed sets on $E$, is homeomorphic to the Cantor set. Picking any $K>0$ such that $\max_{t \in [0,1]} ( \left\vert g(t) \right\vert ,\left\vert g'(t) \right\vert )< K$, a way to do so is to consider the compact domain $E= [0,1] \times [-K,K]$, the right-hand term $f: E \to E$ defined as $f(x,z)= (1, g'(x))$ for all $(x,z) \in E$, the initial condition $y_0=(0,g(0))$ and the IVP on $[0,1]$: 

\begin{equation}
\begin{cases}
y'(t)= f(y(t))\\
y(0)= y_0
\end{cases}
\end{equation}

Note that in this way the solution $y$ of this IVP is unique, while it is easy to see that the right-hand term $f$ is a solvable function. 

\end{example}

Concerning the above example, one interesting observation can be made. 

\begin{remark}
If instead of using the Cantor set we choose to select as set of discontinuity points the fat Cantor set, which is a well known modified version of the Cantor set in which the middle nth is removed from the unit interval at the nth step of its construction (instead of the middle third) then we end up with a solution of the IVP whose derivative has a set of discontinuity points of positive Lebesgue measure. This implies that this derivative is not Riemann integrable, and that one essential condition of the Caratheodory method for solving discontinuous ODEs is not met. Nonetheless, our method can still successfully recover the solution also for these pathological cases. 
\end{remark}

Moving on to the next example, we set the premises to illustrate the fact that, for each countable ordinal $\alpha < \omega_1$, we can construct an example of an IVP with solvable right-hand term $f$ such that its sequence of $f$-removed sets satisfies $E_\alpha \neq \emptyset$. Note that all the examples presented so far satisfy $E_1 \neq \emptyset$ and $E_2 = \emptyset$. Indeed, the difference between Example \ref{ex:firstex} and Example \ref{ex:secondex} relies on the form of the set of discontinuity points $E_1$, but the method we can apply to obtain the solution is still the same and it is based on the application of the $(2)$Monkeys approach since $E_2= \emptyset$. We now construct one example that also satisfies $E_2 \neq \emptyset$. 

\begin{example}[Discontinuities for $f_{D_f}$]

\label{ex:thirdex}

Preserving the definitions and the notation introduced in Example \ref{ex:secondex}, this time we define the function $g: [0,1] \to \mathbb{R}$ in the following way: 

\begin{equation}
\begin{cases}
g(x)= x^2 \sin \left( \frac{1}{x} \right) + \displaystyle\sum_{n=0}^{\infty} \displaystyle\sum_{m=1}^{2^n} g_{m,n}(x) & \text{  if  } x \neq 0 \\
g(x)= 0 & \text{ otherwise}
\end{cases}
\end{equation}

It is easy to see that the previous analysis still applies to this case, meaning that function $g$ is continuous and differentiable, with derivative $g'$ that is discontinuous on the Cantor set $C$. Nonetheless, if we consider the restriction of such function on $C$ we have $g \restriction_{C} (x)= x^{2} \sin \left( \frac{1}{x} \right)$ and its derivative will be $g' \restriction_{C} (x)= 2x \sin{ \left( \frac{1}{x} \right) } - \cos{ \left( \frac{1}{x} \right) }$ which is bounded and discontinous on $x=0$. Therefore, this implies that by using function $g'$ just introduced as the derivative involved with the construction of an IVP of the type of the one in Equation \ref{eq:derivODE} we can obtain an IVP defined over a compact domain $E$ with a solvable right-hand term $f$ such that the first set $E_1$ in the sequence of $f$-removed sets on $E$ is homeomorphic to the Cantor set while the second set $E_2 \neq \emptyset$.

\end{example}

The technique just presented can be extended naturally in order to build, for any $\alpha < \omega_1$, right-hand terms for which we have $E_\alpha \neq \emptyset$. We do not decribe here the whole procedure due to the cumbersome details involved, but instead we provide the intuition, since all the conceptual elements required for the extension have been already provided by the last example. Indeed, by adding extra terms of the type $ (x-x_0)^{2} \sin \left( \frac{1}{x-x_0} \right)$ for some $x_0 \in C$  in the definition above we could generate more discontinuity points for $g' \restriction_{C}$ in the same way it was just done for $x=0$. Specifically, if we take some technical care to sum an uncountable number of them (in a similar way to what was already done before using the intervals $I_{m,n}$ in order to have discontinuities in the whole Cantor set) we can obtain a set of discontinuity points for $g' \restriction_{C}$ that is homeomorphic to the Cantor set. This is a consequence of the fact that every closed uncountable subset of the Cantor set $C$ is homeomorphic to the Cantor set itseself. This leads to an IVP defined over a compact domain $E$ with a solvable right-hand term $f$ such that $E_2$, the second set in the sequence of $f$-removed sets on $E$, is homeomorphic to the Cantor set. At this point it is easy to see that the whole process can be straightforwardly repeated any countable number of times by adding the same countable number of extra terms defined by cases, where each one of these terms represents a new uncountable set of discontinuities included in the previously added one. Hence, in this way we can construct, for any $\alpha < \omega_1$, an example of a right-hand term $f$ that satisfies $E_\alpha \neq \emptyset$. 

Another alternative way to obtain the same extension is to construct examples with the technique of Example \ref{ex:derivODE} applied to any of the known complexity ranks for differentiable functions. Indeed, several \emph{differentiabiliy ranks} for measuring descriptive complexity of differentiable functions have been introduced in literature. These ranks can be used for the purpose of providing in a structured way more and more complex functions to play the role of function $g$ in Equation \ref{eq:derivODE}: some of these ranks are, for istance, the Kechris-Woodin rank or the Zalcwasser rank, which can be found in \cite{kechris1986ranks} and \cite{Za30} respectively. The relations between all the existing notions are investigated in \cite{Ra91} and \cite{ki1997denjoy}.

However, independently from which of these two routes is taken, the key concept behind the creation of highly elaborate examples is the same: one elementary discontinuous derivative (such as the one in Example \ref{ex:firstex}) is used as a building block that is then rescaled and concatenated into smaller and smaller intervals converging to a new discontinuity point. It is then clear that the nature of the discontinuity on this point of the new function obtained this way would be strictly more complex than the nature of the discontinuities featured in the example used as an elementary block. Then, by using the function obtained this way as a new elementary block to rescale and concatenate, the process can continue in a fractal-like iterative fashion. 

Therefore, our study establishes the premises for ranking discontinuous IVPs depending on the complexity required to solve them and foretells the development of a related hierarchy. Up to that point, all examples were mostly obtained from integrating various derivatives. Therefore, it might seem logical to directly derive the ranking of these dynamical systems from the above mentioned differentiability ranks. However, IVPs solving is a more general problem than integration, and therefore similar rankings would necessary be incomplete for our tasks. Indeed, this approach does not suit our purpose, since it only characterizes a limited subclass of systems, i.e.\ the systems yielded by application of the trick introduced with Equation \ref{eq:derivODE}. Despite being a relevant and insightful subclass, this approach "from below" fails to exhaust the generality of the problem. By using the sequence of removed sets to represent the ranking, we instead propose an approach "from above" which builds from the commonalities of those examples and extends beyond them to a more complete analysis that is not just tailored on derivatives defined over the reals. 

The following section presents an example of an interesting IVP that has the power of solving the halting problem for Turing machines and that is constructed using a specific solvable right-hand term which is not directly built from a discontinuous derivative in the sense of Example \ref{ex:derivODE}.

\section{The halting set and Turing jumps: a building block for hypercomputation}
\label{sec:jumps}

As we already explained, classes of dynamical systems of ODEs have been largely used to characterize complexity for discrete models of computation, where the main connecting tool used in this context has been to design analog dynamics that could simultate the action of Turing machines' transiction functions. In this spirit, the purpose of this section is to show one main utility of solvable IVPs by outlining that their solutions can be used to simulate transfinite computations. By definition, a transfinite computation is a computation that may produce as output any set in the hyperarithmetical hierarchy. This implies that the maximum number of steps for a transfinite computation is bounded by the first nonrecursive ordinal $\omega_1^{CK}$. Following the parallel established by \cite{TAMC06}, where polynomial ODEs where shown to correspond to computable analysis, the latter would mean that solvable ODEs correspond (in the same sense) to transfinite computing.

\subsection{The halting set.}

As a starting point to prove this equivalence, we are going to construct an example of a solvable IVP whose solution assumes, at an integer time, a real value $\mu$ encoding the halting problem for Turing machines. It is important to stress out that, in this example, the capability of the system to reach such a value does not come from nonrecursiveness of initial datas and setting, but instead naturally arises from the discontinuity of the right-hand term. Indeed, the IVP considered in the example has a rational initial condition and a right-hand term defined over a two dimensional domain that is computable everywhere exept a straight-line, where it is discontinuous. We show that this simple structure for the set of discontinuity points of a solvable right-hand term is enough for obtaining a solution that at an integer time assumes a noncomputable value. 

Given a G\"{o}del enumeration of Turing machines, we define the halting problem as the problem of deciding the halting set $H=\{ e : M_e (e) \perp \}$ where $ M_e (e) \perp$ means that machine represented by natural $e$ halts on input $e$. We consider a one-to-one total computable function over the naturals $h:\mathbb{N} \to \mathbb{N}$ that enumerates such a set. It is known that any such function enumerating a noncomputable set naturally generates a noncomputable real number \cite{PORI17}. The following definition expresses this:

\begin{definition}
\label{def:noncompu}
Let $h:\mathbb{N}_0 \to \mathbb{N}$ be a one-to-one computable function such that $h(i)>0$ for all $i \in \mathbb{N}_0$ and such that it enumerates a noncomputable set $A$. Then the real number $\mu$ defined as: 
\begin{equation} \label{def:rel}
\mu= \sum_{i=0}^{\infty} 2^{- h(i)}
\end{equation}
is noncomputable.
\end{definition}

Note that in this way we always have $0<\mu <1$. We now describe a bidimensional dynamical system that generates the real number $\mu$ associated in the sense of the definition above to function $h$ enumerating the halting set. This is illustrated by the following theorem:

\begin{theorem}
\label{thm:HaltingP}
Let $E=[-5,5] \times [-5,5]$. There exists a solvable IVP with unique solution $y: [0,5] \to E$, rational initial condition and right-hand term computable everywhere on $E$ except a straight-line, with  $y_2 (5)=\mu$. 
\end{theorem}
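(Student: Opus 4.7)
The plan is to construct a two-dimensional vector field $f : E \to E$ on $E = [-5, 5]^2$ whose only discontinuity lies on the line $\ell = \{(y_1, y_2) \in E : y_1 = 0\}$, which is computable on $E \setminus \ell$, and whose IVP with rational initial condition $y(0) = (5, 0)$ admits a unique solution $y : [0, 5] \to E$ satisfying $y_2(5) = \mu$. The intended dynamics are Zeno-like: $y_1$ acts as a linear clock decreasing from $5$ to $0$ as $t$ traverses $[0, 5]$, while $y_2$ accumulates the noncomputable value $\mu = \sum_i 2^{-h(i)}$ through localized contributions in a sequence of strips shrinking toward $\ell$.

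Concretely, in the half-plane $\{y_1 > 0\}$ I would set $f_1 \equiv -1$ and $f_2(y_1, y_2) = g(y_1)$, where $g : (0, 5] \to \mathbb{R}_{\geq 0}$ is built as a sum of continuous bumps $g_i$, each supported in a computable strip $S_i \subset (0, 5]$, vanishing at the endpoints of $S_i$ and satisfying $\int_{S_i} g_i = 2^{-h(i)}$. I extend $g(0) = 0$ (and $g = 0$ on the negative side), set $f(0, y_2) = (-1, 0)$ on $\ell$, and $f \equiv (0, 0)$ on $\{y_1 < 0\}$ (the latter region is not visited by the solution). The candidate solution is then $y(t) = \bigl(5 - t, \int_0^t g(5 - s)\, ds\bigr)$, yielding $y_2(5) = \int_0^5 g(u)\, du = \sum_i 2^{-h(i)} = \mu$ as required. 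Uniqueness of the solution follows from the strict monotonicity of $y_1$ in the upper half-plane together with the deterministic form of $y_2$.

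To verify solvability, I would check that $f$ is the pointwise limit of the continuous fields obtained by truncating $g$ to its first $k$ bumps and replacing the jump of $f_1$ at $\ell$ by a steep continuous ramp, hence $f$ is of class Baire one; that its discontinuity set is precisely $\ell$ (since $f_2$ extends continuously across $y_1 = 0$ with value $0$ and $f_1$ is the only source of a jump at $\ell$), a closed straight line; and that $f \restriction_\ell \equiv (-1, 0)$ is constant, hence continuous, so $E_2 = \emptyset$ in the sequence of $f$-removed sets on $E$. Computability of $f$ on $E \setminus \ell$ follows from the computability of $h$ and of the strip partition.

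The main obstacle is twofold. First, the bound $|f_2| \leq 5$ forces each strip width to satisfy $|S_i| \geq 2^{-h(i)}/5$, while the partition must be effectively described and its total length must equal a computable value; this is handled by interleaving the bump strips with padding intervals of computably chosen widths drawn from a fixed summable sequence, and slotting each bump $g_i$ into the first padding interval wide enough to contain it (the allocation succeeds because $h$ is injective, so only finitely many bumps require width at least any given positive value). Second, for the one-sided derivative $y'(5^-) = f(y(5)) = (-1, 0)$ to exist, one needs $g(y_1) \to 0$ as $y_1 \to 0^+$; this is arranged by taking the padding widths large enough that the bump amplitudes (proportional to $2^{-h(i)}/|S_i|$) tend to $0$, using that $h(i) \to \infty$ as a consequence of injectivity into the infinite set $H$.
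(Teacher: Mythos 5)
Your construction cannot work, and the obstruction is structural rather than a matter of missing technical care in the bump allocation. In your design the clock is uniform, $f_1\equiv-1$ on $\{y_1>0\}$, so the trajectory meets the discontinuity line $\ell=\{y_1=0\}$ exactly at the \emph{computable} time $t=5$, and $y_2(5)=\int_0^5 g(u)\,du$ with $|g|\le 5$ (forced by $f:E\to E$). But if, as the theorem requires and as you claim, $f$ is computable on $E\setminus\ell$, then $g$ is computable on $[\delta,5]$ uniformly in rational $\delta>0$, hence $\int_\delta^5 g$ is computable uniformly in $\delta$, and the tail bound $\bigl|\int_0^5 g-\int_\delta^5 g\bigr|\le 5\delta$ makes $y_2(5)$ a computable real --- contradicting $y_2(5)=\mu$. (Equivalently: on $[0,5-\delta]$ the solution stays at distance $\ge\delta$ from $\ell$, where the IVP is computable with a unique solution, so $y(5-\delta)$ is computable uniformly in $\delta$ by \cite{collins2009effective}, and $|y_2'|\le5$ then gives computability of $y_2(5)$.) So no allocation of bumps can reconcile ``computable off the line'' with ``$\int_0^5 g=\mu$''; in your specific scheme the failure surfaces in the allocation itself: with the rule ``slot bump $i$ into the first padding interval wide enough,'' deciding whether a given padding interval ever receives a bump (or which one) amounts to deciding questions about the range of $h$, so the resulting $g$ is not computable away from $\ell$ after all. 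The difficulties you flag about widths, amplitudes and $y_2'(5^-)=0$ are real but secondary; they cannot be ``handled'' in a way that escapes the integral argument above.

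The paper's proof avoids exactly this trap, and the difference is the key idea your proposal is missing: the clock is nonuniform and state-dependent. There $y_1$ crawls through the strips $[-2^{-i},-2^{-(i+1)}]$, spending time $\tau(i)\approx 2^{-\min(h(i),i)/2}$ in the $i$-th strip, while $y_2$ gains the $i$-th enumerated term of $\mu$ in that strip; the increment is indexed by the enumeration \emph{stage}, hence computable from the strip, and the line is reached only at the \emph{noncomputable} time $\tau^*=\sum_i\tau(i)<5$, after which the dynamics are frozen so that $y_2(5)=\mu$. The noncomputability of $\mu$ is thus hidden in the hitting time, not in the spatial profile of the right-hand term; consistently, the paper observes that $\lim_{i\to\infty}k(i)\neq0$ is forced, i.e.\ the vector field cannot slow down to $0$ computably at the line, which is precisely why a scheme with a computable hitting time like yours cannot be repaired. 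The square-root slowdown in $\tau(i)$ then handles the remaining delicate point, namely differentiability of the solution at $\tau^*$ with derivative $(0,0)=f(y(\tau^*))$ --- the analogue, at a noncomputable time, of your remark about the one-sided derivative at $t=5$.
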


\begin{remark}
The spirit of the construction of such an example is inspired by the technique used in \cite{gracca2008boundedness}, where the solution of the IVP considered is stretched in a controlled manner so that it grows infinitely approaching a fixed noncomputable time. In our case instead, with the above example, we are replacing their indefinite growth with a bounded one, yielding a finite convergence for the solution. This introduces many complications, and the fact that we want to guarantee differentiability for the solution is a true difficulty. 
\end{remark}

\begin{proof}
 
We start presenting the proof by first explaining here informally the ideas involved. Then, once the intuition is clear, we complete the proof by filling the missing technical details. 

Intuitively, we first discretize time by introducing specific time slots in which both components of the solution, $y_1$ and $y_2$, have a well-defined behaviour. Specifically, we require the first component, which is negative, to increase by a factor of $2$ in each of these time slots, converging to zero. Instead, the second component, which is positive, is required to incrementally converge to the real $\mu$ by adding to itself the quantity $2^{-h(i)}$ on the $i$-th time step. We need two components because we want the right-hand term to be computable outside its set of discontinuity points. This is achievable in this way since indeed it is possible to computably implement the correct derivative for $y_1$ in each time slot by only going through the enumeration described by function $h$ while looking at the value of the second component $y_2$ in that slot. Then, the existence and continuity of the solution is granted by designing an infinitely countable sequence of time slots that converges suitably. 

We now informally describe the behavior of the two components of the solution $y=(y_1, y_2)$. The formal details on how to fully construct the IVP that ensures such behaviour are then given later on in the proof. We first define the function that represents the discretized time evolution of the dynamical system:

\begin{definition}          
\label{def:slowtime}
Let $h:\mathbb{N}_0 \to \mathbb{N}$ be a one-to-one computable function such that $h(i)>0$ for all $i \in \mathbb{N}_0$ and such that it enumerates the halting set $H$. Define the function $\tau :\mathbb{N}_0 \to \mathbb{N}$ to be the total computable function such that:
\begin{equation}
\tau (i) =
\begin{cases}
2^{-h(i)/2} & \text{if $h(i) < i$} \\
2^{- i/2}  & \text{if $h(i) \geq i$}
\end{cases}
\end{equation}
\end{definition}

\noindent note that the above definition implies $ 2^{- i/2} \leq \tau(i) \leq 1$ for all $i \in \mathbb{N}_0$. Moreover, the quantity $\tau^*= \sum_{i=0}^{\infty} \tau (i)$ is finite since we have:
\begin{equation}
\begin{split}
\tau^* < & \sum_{i=0}^{\infty} 2^{- h(i)/2} + \sum_{i=0}^{\infty} 2^{- i/2} \\
 & < \mu + \frac{1}{(1-1/\sqrt{2})} = \mu + 2 + \sqrt{2} < 5.
\end{split}
\end{equation}

As we will soon show, this quantity $\tau^*$ is going to represent the time required for the solution to reach the noncomputable value $\mu$. 

\begin{remark} The reason for measuring time steps with Definition \ref{def:slowtime}, instead of directly exploiting the construction of $\mu$ via Definition \ref{def:noncompu}, is technical. The intuition behind it is that we want time to evolve slowly enough when compared to the increasing rate of the solution. This consideration takes care of one of the construction's main difficulties: the solution's differentiability at time $\tau^*$, when the derivative is discontinuous. 
\end{remark}

Let us now proceed to analyze the behaviour of the solution $y$. For the first component $y_1$ we have a dynamic given by a function $f_1$ such that, for all $i \in \mathbb{N}_0$, if we have:

\begin{equation}
\begin{cases}
y'_1(t) = f_1 (y_1(t)) &\forall t \in [0,\tau(i)] \\
y_1 (0) = -2^{-i}
\end{cases}
\end{equation}

\noindent then we have $y_1(\tau(i))= -2^{-(i+1)}$.  In other words, we require $y_1$ to be an increasing function such that at every time step $\tau(i)$ its value increases by a factor of $2$, converging then to $0$ as time converges to $\tau^*$. To achieve this goal we require $f_1$ to be autonomous, with no explicit dependence on time. 


For the second component $y_2$ we have a dynamic given by a function $f_2$ such that, for all $i \in \mathbb{N}_0$, if we have:

\begin{equation}
\begin{cases}
y'_2 (t) = f_2 (y_1(t)) & \forall t \in [0,\tau(i)] \\
y_1 (t) \in [-2^{-i},  -2^{-(i+1)}] &  \forall t \in [0,\tau(i)] \\
y_2 (0) = \sum_{m=0}^{i} 2^{-h(m)}
\end{cases}
\end{equation}

\noindent then we have $y_2(\tau(i))= \sum_{m=0}^{i+1} 2^{-h(m)}$. In other words, we require $y_2$ to be an increasing function such that at every time step $\tau(i)$ its value increases of the quantity $2^{-h(i+1)}$, converging then to $\mu$ as time converges to $\tau^*$. 


A careful analysis proves that such a solution $y$ is indeed differentiable at time $\tau^*$ and that the derivative of both components exists and equals zero at such time. Moreover, we have that $y_1(\tau^* )=0$ and $y_2(\tau^* )=\mu$. At this point, forcing the dynamic to remain constant for the remaining time is sufficient to obtain $y_2 (5)=\mu$. This yields the desired outcome since the IVP has reached, at a computable time, a noncomputable value that encodes the halting problem. 

Now that we have described the ideas and the key concepts of the proof, we can proceed by providing the technical details that are missing. 

To construct the IVP we consider the time interval $[0,5]$ and the domain $E=[-5,5] \times [-5,5]$. We have $f: E \to E$ and we and consider the IVP:  

\begin{equation}
\label{eq:alphaIVP}
\begin{cases}
y'(t)= f(y(t))= (f_1 (y(t)), f_2 (y(t))) & \forall t \in [0,5] \\
y(0) = (-1, 2^{-h(0)})
\end{cases}
\end{equation}

Function $f_1$ is constructed in such a way that its action depends only on the value of its first component. It is defined by cases. First we require $f_1(x_1, x_2)=1$ if $x_1 \in [-5,-1]$ and $f_1(x_1, x_2)=0$ if $x_1 \in [0,5]$. Then we construct it piecewisely on intervals of the form $[- 2^{-i},- 2^{-(i+1)}]$ for all $i \in \mathbb{N}_0$. In each of these intervals, the function will be piecewise linear. More precisely, if $x_1 \in [- 2^{-i},- 2^{-(i+1)}]$ for some $i \in \mathbb{N}_0$ we define:

\begin{equation}
\label{eq:haltderivative}
f_1(x_1,x_2)= 
\begin{cases}
(k(i)2^{i+2}-1) x_1 + 4 k(i)   & \text{            if   } x_1 \in [- 2^{-i}, -3 \cdot 2^{-(i+2)}] \\
- (k(i)2^{i+2}+1) x_1 -2 k(i)  & \text{             if   } x_1 \in [ -3 \cdot 2^{-(i+2)}, - 2^{-(i+1)}]
\end{cases}
\end{equation}

where $k(i)>0$ is a computable real that we can select in such a way that, if we have $y_1(t)=- 2^{-i}$ for any $t \in [0,5-\tau(i)]$ then $y_1(t+\tau(i))=- 2^{-(i+1)}$. 

Note that in this way $f_1(x_1,x_2)$ is clearly always computable whenever $x_1 \neq 0$. Indeed it is easy to check that, for all $x_2 \in [-5,5]$ and for all $i \in \mathbb{N}_0$, we have $f_1(-3 \cdot 2^{-(i+2)},x_2) = k(i)+ 3 \cdot 2^{-(i+2)}$ and $f_1(-2^{-i},x_2) =2^{-i}$.

We now need to show that such $k(i)$ indeed exists and is computable. By integrating the above expression within a time length of $\tau(i)$ we obtain:

\begin{equation}
\begin{split}
\tau(i) = & \int_{-2^{-i}}^{-3 \cdot 2^{-(i+2)}} \frac{d x_1}{(k(i)2^{i+2}-1) x_1 + 4 k(i)} \\
 & - \int_{-3 \cdot 2^{-(i+2)}}^{-2^{-(i+1)}} \frac{d x_1}{(k(i)2^{i+2}+1) x_1 + 2 k(i)} 
\end{split}
\end{equation}

from which it follows that: 

\begin{equation}
\begin{split}
\tau(i) = & \frac{1}{ k(i)2^{i+2}-1 } \ln \left( k(i) 2^{i}+ 3/4 \right) \\
 &  + \frac{1}{k(i)2^{i+2}+1} \ln \left( k(i) 2^{i+1}+ 3/2 \right) 
\end{split}
\end{equation}

Therefore, if we call $F: \mathbb{R}^+ \to \mathbb{R}$ the function defined as:

\begin{equation}
F(k(i))= \frac{1}{ k(i)2^{i+2}-1 } \ln \left( k(i) 2^{i}+ 3/4 \right) + \frac{1}{k(i)2^{i+2}+1} \ln \left( k(i) 2^{i+1}+ 3/2 \right) - \tau(i)
\end{equation}

our problem reduces to determinate if $F$ has a computable zero. 

A straightforward calculation shows that for all $i \in \mathbb{N}_0$ we have $F(2^{-(i+1)})= \ln (625/32) - \tau(i)$, which, since by definition $2^{-i/2}\leq \tau(i) \leq 1$ implies $F(2^{-(i+1)}) >0$. 

On the other hand, for all $i \in \mathbb{N}_0$ we have $F(1) =  \frac{1}{ 2^{i+2}-1 } \ln \left( 2^{i}+ 3/4 \right) + \frac{1}{2^{i+2}+1} \ln \left( 2^{i+1}+ 3/2 \right) - \tau(i)$. Since for all $i \in \mathbb{N}_0$ we know that $$ \frac{1}{ 2^{i+2}-1 } \ln \left( 2^{i}+ 3/4 \right) + \frac{1}{2^{i+2}+1} \ln \left( 2^{i+1}+ 3/2 \right) < 2^{-i/2} $$ and that by definition $\tau(i) \geq 2^{-i/2}$, it follows that $F(1)< 0$. 

Because function $F$ is decreasing and computable in every interval $[2^{-(i+1)}, 1]$ for all $i \in \mathbb{N}_0$, it follows that any such interval contains one unique zero for such function, and thanks to \cite{Wei00}[Theorem 6.3.8] we know that such zero is computable. Therefore we can always pick such zero to play the role of coefficient $k(i)$ in Equation \ref{eq:haltderivative}, ensuring that if we have $y_1(t)=- 2^{-i}$ for any $t \in [0,5-\tau(i)]$ then $y_1(t+\tau(i))=- 2^{-(i+1)}$. 

Note that $f_1$ as defined above is then computable and bounded in each interval (i.e.\ when $x_1 \in [- 2^{-i},- 2^{-(i+1)}]$ for all $i \in \mathbb{N}_0$) since each $k(i)$ is. Moreover, this consideration extends trivially to the whole domain $E$ exept for the line $\{ (0,x_2) : x_2 \in [-5,5]\}$. The continuity of $f_1$ on such line depends on the behavior of $k(i)$ as $i$ tends to infinity. Precisely, $f_1$ is continuous on such line if and only if $\lim_{i \to \infty} k(i)=0$. 

Function $f_2$ is constructed in a similar manner; once again its action depends only on the value of its first component. First we require $f_2(x_1, x_2)=0$ if $x_1 \in [-5,-1]$ and  if $x_1 \in [0,5]$. Then we construct it on intervals of the form $[- 2^{-i},- 2^{-(i+1)}]$ for all $i \in \mathbb{N}_0$ making use of function $f_1$ as previously defined. More precisely, if $x_1 \in [- 2^{-i},- 2^{-(i+1)}]$ for some $i \in \mathbb{N}_0$ we define $f_2(x_1,x_2)= 2^{-h(i+1)-1} \pi f_1(x_1,x_2) \sin{\left( \pi \left(\frac{x_1 + 2^{-i}}{-2^{-(i+1)}+ 2^{-i}}\right) \right)}$. In this way we are making sure that variable $y_2$ increases of $2^{-h(i+1)}$ within the same time step $\tau(i)$. Indeed we have: 
\begin{equation}
\begin{split}
 \int_{0}^{\tau(i)} f_2(x_1,x_2) dt  & = 2^{-h(i+1)-1}  \pi \int_{0}^{\tau(i)}  x_1^{'} \sin{\left( \pi \left(\frac{x_1 + 2^{-i}}{-2^{-(i+1)}+ 2^{-i}}\right)\right)} dt \\
 \Rightarrow y_2 (\tau(i)) - y_2 (0)  & =  2^{-h(i+1)-1} \pi \int_{- 2^{-i}}^{- 2^{-(i+1)}} \sin{\left( \pi\left( \frac{x_1 + 2^{-i}}{-2^{-(i+1)}+ 2^{-i}} \right) \right)} dx_1 \\
 & =  2^{-h(i+1)} 
\end{split}
\end{equation}

It is easy to see that $f_2 (-2^{-i},x_2) = 0$ for all $i \in \mathbb{N}_0$, for all $x_2 \in [-5,5]$. Therefore, note that $f_2$ as defined above is computable and bounded everywhere exept on the line $\{ (0,x_2) : x_2 \in [-5,5]\}$, where its continuity properties follows from the ones of $f_1$ on such line. 

We now need to verify that, for both $y_1$ and $y_2$, the derivatives exist at time $\tau^*$. In other words, we need to verify that, if the components of the solution $y$ behave accordingly to this dynamic, then $y$ is differentiable at time $\tau^*$. We start with variable $y_1$. By definition the derivative in $\tau^*$ is defined as $y'_1(\tau^*)= \lim_{t \to \tau^*} D_{y_1}(t) = \frac{y_1 (\tau^*) - y_1 (t)}{\tau^* - t}$. For all $k \in \mathbb{N}_0$ consider the quantity: $D_{y_1}(\tau(k)) = \frac{y_1 (\tau^*) - y_1 (\tau(k))}{\tau^* - \tau(k)}= \frac{2^{-(k+1)}}{\sum_{i=k+1}^{\infty} \tau(i)}$. We want to show that $ D_{y_1}(\tau(k))$ tends to zero as $k$ tends to infinity. This is enough to prove the existence of the derivative at time $\tau^*$ since $y_1$ as described is increasing. To do so we focus on bounding the quantity $\frac{2^{-(k+1)}}{\tau(k+1)}$ since clearly $ D_{y_1}(\tau(k)) < \frac{2^{-(k+1)}}{\tau(k+1)}$. By construction of function $\tau$, we have two different cases, one in which $h(k+1) < k+1$ and one in which $h(k+1) \geq k+1$. If $h(k+1) < k+1$ then $\tau(k+1)= 2^{-h(k+1)/2}$ and hence $\frac{2^{-(k+1)}}{\tau(k+1)} = 2^{-(k+1) + h(k+1)/2}$ from which it follows that $\frac{2^{-(k+1)}}{\tau(k+1)} < 2^{-(k+1)/2}$. Instead, if $h(k+1) \geq k+1$ then $\tau(k+1)= 2^{(k+1)/2}$ and hence $\frac{2^{-(k+1)}}{\tau(k+1)} = 2^{-3(k+1)/2 }$. Therefore, in both cases we have $\frac{2^{-(k+1)}}{\tau(k+1)} < 2^{-(k+1)/2}$ which implies $D_{y_1}(\tau(k)) < 2^{-(k+1)/2}$. Consequently, $y'_1(\tau^*)= \lim_{t \to \tau^*} D_{y_1}(t)= \lim_{k \to \infty} D_{y_1}(\tau(k)) =0$. 

We proceed with variable $y_2$. By definition the derivative in $\tau^*$ is defined as $y'_2(\tau^*)= \lim_{t \to \tau^*} D_{y_2}(t) = \frac{y_2 (\tau^*) - y_2 (t)}{\tau^* - t}$. For all $k \in \mathbb{N}_0$ consider the quantity: $D_{y_2}(\tau(k)) = \frac{y_2 (\tau^*) - y_2 (\tau(k))}{\tau^* - \tau(k)}= \frac{\sum_{i=k+2}^{\infty} 2^{-h(i)}}{\sum_{i=k+1}^{\infty} \tau(i) }$. We want to show that $ D_{y_2}(\tau(k))$ tends to zero as $k$ tends to infinity. Once again, this is enough to prove the existence of the derivative at time $\tau^*$ since $y_2$ as described is increasing. To do so we focus on bounding the quantity $\frac{2^{-h(i)}}{\tau(i)}$ for all $i>k$ since clearly $ D_{y_2}(\tau(k)) < \sum_{i=k+1}^{\infty} \frac{2^{-h(i)}}{\tau(i)}$. By definition of function $\tau$, for all $i>k$ we have two different cases, one in which $h(i) < i$ and one in which $h(i) \geq i$. If $h(i) < i$ then $\tau(i)= 2^{-h(i)/2}$ and hence $\frac{2^{-h(i)}}{\tau(i)} = 2^{-h(i)/2}$. Instead, if $h(i) \geq i$ then $\tau(i)= 2^{i/2}$ and hence $\frac{2^{-h(i)}}{\tau(i)} \leq 2^{-3i/2}$. Therefore, in both cases we have $\frac{2^{-h(i)}}{\tau(i)} \leq 2^{-3i/2} + 2^{-h(i)/2}$ which implies $ D_{y_2}(\tau(k)) \leq \sum_{i=k+1}^{\infty}  2^{-3i/2} + 2^{-h(i)/2}$. But since $\sum_{i=k+1}^{\infty}  2^{-3i/2} + 2^{-h(i)/2}$ is the remainder of a converging serie, we have that $\lim_{k \to \infty} \sum_{i=k+1}^{\infty}  2^{-3i/2} + 2^{-h(i)/2} = 0$. Consequently, $y'_2(\tau^*)= \lim_{t \to \tau^*} D_{y_2}(t)= \lim_{k \to \infty} D_{y_2}(\tau(k)) =0$. 

We have just shown that the derivatives of variables $y_1$ and $y_2$ at time $\tau^* $ both exist and equal to zero. Moreover, by continuity we have $y_1(\tau^* )=0$ and $y_2(\tau^* )=\mu$.

By construction the IVP considered in Equation \ref{eq:alphaIVP} has a unique solution $y:[0,5] \to E$ and since $f_1(x_1, x_2)=f_2(x_1, x_2)=0$ when $x_1 \in [0,5]$, we have $y_2(5)=\mu$ which is the noncomputable real number encoding the halting problem according to Definition \ref{def:noncompu}. Since the solution of a computable IVP must be computable according to \cite{collins2009effective}, it follows that in Equation \ref{eq:haltderivative} we must have $\lim_{i \to \infty} k(i) \neq 0$, which implies that the set of discontinuity points of $f$ is the straight-line $\{ (0,x_2) : x_2 \in [-5,5]\} \subset E$. Finally, it is clear that $f$ is a solvable function, since it is a function of class Baire one and it is identically equal to zero on the mentioned discontinuity line. 

\end{proof}

The above theorem sets the premises to construct more complex simulations involving solvable IVPs which yield as output any hyperarithmetical real. A real number $x$ is called hyperarithmetical if its left cut, i.e.\ the set of rationals $\{ q \in \mathbb{Q} : q < x \}$ is hyperarithmetical. In analogy with the techniques discussed at the end of Section \ref{sec:examples} for building complex examples of solvable systems, it is clear that the simulation described above for the halting set represents the key building block to be exploited transfinitely (by means of proper rescaling and concatenation) with the goal of yieding hyperarithmetical reals in any higher level of the hierarchy. More on this topic is discussed in the conclusions below. 

\section{Conclusion}
\label{sec:conclusion}

We have explored the characteristics of initial value problems (IVPs) involving discontinuous ordinary differential equations (ODEs) that have a unique solution. Through our research, we've delineated a particular class of these systems, which we have termed \emph{solvable}. For such systems, the solution can always be obtained analytically through a transfinite recursion process, with a countable number of steps. 

We have presented various instances of these systems and described a methodology for constructing increasingly complex examples. We have shown that even basic configurations of these solvable systems, such as a simple set of discontinuity points, can produce noncomputable values and address the halting problem.

Our approach bears similarities to Denjoy's method for tackling the challenge of antidifferentiation, and in consideration of the results and methodologies discussed in studies like \cite{dougherty1991complexity} and \cite{Wes20}, we believe we have laid the groundwork for a more detailed analysis of these systems. This could potentially lead to establishing a classification and hierarchy within this robust class of solvable systems.

%

More precisely: 
the integrability rank that inspired the modus operandi of our ranking is the Denjoy rank. Now: \begin{itemize}
\item An analysis has been done for the Denjoy rank: an unpublished theorem from Ajtai, whose proof is included in \cite{dougherty1991complexity}, demonstrates that once a code for a derivative $f$ is given as a computable sequence of computable functions converging pointwise to $f$, then the antiderivative $F$ of $f$ is $\Pi_1^1$ relative to $f$. This fact has direct implications related to the hierarchy of hyperarithmetical reals (also called $\Delta_1^1$ reals) expressed formally by a Theorem from \cite{dougherty1991complexity}, which proves that the hyperarithmetical reals are exactly those reals $x$ such that $x=\int_{0}^{1} f$ for some derivative $f$ of which we know the code of. 
\item An alternative computability theoretic analysis  of   Denjoy's rank has been done in \cite{Wes20}, relating levels of the hierarchy to levels of the arithmetical hierarchy in some precise manner, using a slightly alternative setting, on the way objects are encoded. 
\end{itemize}

We posit that applying a similar analytical approach to the framework of solvable systems could lead to refined statements regarding the ranking and ordinals involved, tailored to specific classes of functions or dynamics.

This analytical perspective, in conjunction with research detailed in papers like \cite{TAMC06} and \cite{JournalACM2017}, which explore the simulation of discrete computational models through analog models based on ODE systems, suggests the potential of solvable IVPs as an analog framework for simulating transfinite computations or offering an alternative method for presenting such computations.

Moreover, our discussions have highlighted categories of ordinary differential equations that exhibit complexities not commonly covered in traditional texts on ODEs, along with numerous existing counterexamples in the literature. In line with arguments from \cite{dougherty1991complexity}, our findings indicate that the complete set of countable ordinals is essential in any constructive process for solving ODEs in general. Furthermore, for each countable ordinal, we can construct a solvable ODE with corresponding complexity. This underscores the depth and novelty our research contributes to the field of differential equations.
\bibliographystyle{plainurl}
\bibliography{bournez,perso}
\end{document}